\documentclass[12pt,reqno]{amsart}
\usepackage{amssymb}
\usepackage{amsmath}

\usepackage{amsmath,amssymb}
\usepackage[dvips]{graphicx,psfrag}

\newcommand{\rank}{\mbox{rank}}

\newcommand{\rmv}[1]{}

\newcommand{\mH}{\mathcal{H}}

\newtheorem{thm}{Theorem}
\newtheorem{lem}[thm]{Lemma}

\newtheorem{cor}[thm]{Corollary}

\newtheorem{asp}[thm]{Assumption}

\newtheorem{cnj}[thm]{Conjecture}

\begin{document}

\title[Solving the 100 Swiss Francs Problem]{Solving the 100 Swiss Francs Problem}

\author[M.Zhu] {Mingfu Zhu}
\address{Center for Human Genome Variation\\
Duke University\\
Durham, NC 27708, USA}

\email{mingfu.zhu@duke.edu}

\author[G.Jiang]{Guangran Jiang}
\address{Department of Computer Science\\
Zhejiang University\\
Hangzhou, Zhejiang, 310027, China}

\email{rpggpr@zju.edu.cn}

\author[S.Gao]{Shuhong Gao}
\address{Department of Mathematical Sciences\\
Clemson University\\
Clemson, SC 29634-0975, USA}

\email{sgao@clemson.edu}

\subjclass{Primary 65H10; Secondary 62P10, 62F30}

\keywords{Maximum likelihood estimation, latent class model, solving
polynomial equations, algebraic statistics.}

\begin{abstract}
Sturmfels offered 100 Swiss Francs in 2005 to a conjecture, which deals with a special case of the maximum likelihood estimation for a latent class model. This paper confirms the conjecture positively.
\end{abstract}
\maketitle

\section{The conjecture and its statistical background}
Sturmfels \cite{S2} proposed the following problem: Maximize the
likelihood function
\begin{equation} \label{eq1.1}
L(P)=\prod_{i=1}^4 p_{ii}^4 \times \prod_{i \not= j}p_{ij}^2
\end{equation}
over the set of all $4 \times 4$-matrices $P=(p_{ij})$ whose entries
are nonnegative and sum to $1$ and whose rank is at most two. Based
on numerical experiments by employing an expectation-maximization(EM) algorithm, Sturmfels \cite{PS,S2}
conjectured that the matrix
 $$P=\frac{1}{40}
\begin{pmatrix}
  3 & 3 & 2 & 2 \\
  3 & 3 & 2 & 2 \\
  2 & 2 & 3 & 3 \\
  2 & 2 & 3 & 3 \\
\end{pmatrix}$$
is a global maximum of $L(P)$. He offered 100 Swiss francs for a
rigorous proof in a postgraduate course held at ETH Z\"{u}rich in
2005.

Partial results were given in the paper in \cite{FHRZ}, where the general
statistical background for this problem is also presented. This
problem is a special case of the maximum likelihood estimation for a
latent class model. More precisely, by following \cite{FHRZ}, let $(X_1,
\ldots, X_d)$ be a discrete multivariate random vector where each
$X_j$ takes value from a finite state set $S_j=\{1,\dots,s_{j}\}$.
Let $\Omega=S_1\times \cdots \times S_d$ be the sample space. For each
$(x_1,\ldots,x_d) \in \Omega$, the joint probability mass function
of $(X_1, \ldots, X_d)$ is denoted as
$$p(x_1,\ldots,x_d)=P\{(X_1,\ldots,X_d)=(x_1,\ldots,x_d)\}.$$
The variables $X_1, \ldots, X_d$ may not be mutually independent
generally. By introducing an unobservable variable $H$ defined on
the set $[r]=\{1,\ldots r\}$, $X_1, \ldots, X_d$ become mutually
independent. The joint probability mass function in the newly formed
model is
\begin{eqnarray*}
p(x_1,\ldots,x_d,h)&=&P\{(X_1,\ldots,X_d,H)=(x_1,\ldots,x_d,h)\}\\
&=&p(x_1|h)\cdots p(x_d|h) \lambda_h
\end{eqnarray*} where
$\lambda_h$ is the marginal probability of $P\{H=h\}$ and $p(x_j|h)$
is the conditional probability $P\{X_j=x_j|H=h\}$. We denote this new
$r$-class mixture model by $\mH$. The marginal distribution of
$(X_1,\ldots,X_d)$ in $\mH$ is given by the probability mass
function (which is also called \textit{accounting equations} \cite{HL})
$$\label{accounting}
p(x_1,\ldots,x_d)=\sum_{h\in [r]}p(x_1,\ldots,x_d,h)=\sum_{h\in
[r]}p(x_1|h)\cdots p(x_d|h) \lambda_h.$$

In practice, a collection of samples from $\Omega$ are observed. For
each $(x_1,\ldots,x_d)$, let $n(x_1,\ldots,x_d)\in \mathbb{N}$ be
the number of observed occurrences of $(x_1,\ldots,x_d)$ in the
samples. While the parameters $p(x_1|h)$,$\cdots$, $p(x_d|h)$,
$\lambda_h, p(x_1,\ldots,x_d)$ are unknown. The maximum likelihood
estimation problem is to find the model parameters that can best
explain the observed data, that is, to determine the global maxima
of the likelihood function
\begin{equation}\label{mle}
L(\mH)=\prod_{(x_1,\ldots,x_d) \in \Omega}
p(x_1,\ldots,x_d)^{n(x_1,\ldots,x_d)}.\nonumber
\end{equation}

Since each $p(x_1,\ldots,x_d)$ is nonnegative, it is equivalent but more convenient to use
the log-likelihood function
\begin{equation}\label{mle2}
l(\mH)=\sum_{(x_1,\ldots,x_d) \in \Omega} n(x_1,\ldots,x_d)\ln p(x_1,\ldots,x_d),
\end{equation}
where we define $\ln(0)=-\infty$. Finding the maxima of (\ref{mle2})
is difficult and remains infeasible by current symbolic software \cite{CHKS,HKS}.
We can only handle some special cases: small models or highly
symmetric table. The 100 Swiss francs problem is the special case of
$\mH$ when $d=2$, $S_1=S_2=\{$A,C,G,T$\}$, $s_1=s_2=4$ and $r=2$. It
is related to a DNA sequence alignment problem as described in \cite{PS}.
In that example, the contingency table for the observed counts of
ordered pairs of nucleotides (i.e.\ AA, AC, AG, AT, CA, CC,
$\cdots$) is
$$\begin{tabular}{cc}
&\hspace{-0.2cm}A\hspace{0.1cm} C\hspace{0.1cm} G\hspace{0.1cm} T\hspace{0.1cm}\ \\
A&\\
C&\\
G&\\
T&\raisebox{0.65cm}[0pt]{\hspace{-0.3cm}$\begin{pmatrix}
  4 & 2 & 2 & 2 \\
  2 & 4 & 2 & 2 \\
  2 & 2 & 4 & 2 \\
  2 & 2 & 2 & 4 \\
\end{pmatrix}$}\\
\end{tabular}\hspace{-0.2cm}.$$
So the likelihood function (\ref{mle}) in this example is exactly
(\ref{eq1.1}).

Even for this simple case, the problem is surprisingly difficult. We know that the global maxima must exist, as the region of the parameters is compact.
By using an EM algorithm
or Newton-Raphson method and starting from suitable initial points,
one can find some local maxima of the likelihood function.
However, the global maximum property is not guaranteed. We prove that Sturmfels' conjectured solution is indeed a global maximum.

Our paper is organized as follows. We first derive some general
properties for optimal solutions in Section \ref{genprop}, then provide a theoretical solution to the conjecture in Sections \ref{handwr}. In \ref{comp}, we make some comments about using Gr\"obner basis technique in solving this problem and provide a computational solution. Lastly, we suggest several new conjectures in more general cases.

\section{Proof of the conjecture}
\subsection{General Properties}\label{genprop}

We focus on general $n\times n$ matrices $P=(p_{ij})$ in this
section. For convenience we scale each entry of $P$ by $n^2$ so the
entries sum to $n^2$, and take square root of the original
likelihood function. So we may assume that
\begin{equation} \label{product}
L(P)=\prod_{i=1}^n p_{ii}^2 \times \prod_{i \not= j}p_{ij}.
\end{equation}
The problem is
\begin{center}
\begin{tabular}{ll}
Maximize:   & $L(P)$ \\
Subject to:  &  $\sum \limits_{1\leq i, j \leq n} p_{ij} = n^2$,  and \vspace{0.1cm}\\
                  & $p_{ij} \geq 0$, $1\leq i, j \leq n$.
\end{tabular}
\end{center}
Suppose $P=(p_{ij})_{n\times n}$ is a global maximum of $L(P)$.
It is easy to see that  $P$ cannot be the following $n\times n$ matrix
$$J=\begin{pmatrix}
      1 & \ldots & 1 \\
      \vdots &   & \vdots \\
      1 & \ldots & 1 \\
    \end{pmatrix}.
$$

Since the function (\ref{product}) is a continuous function in
$p_{ij}$'s, if one of the entries of $P$ approaches 0, the product
has to approach 0 too, as the other entries are bounded by $n^2$.
Hence the optimal solutions must occur in interior points and we
don't need to worry about the boundary where some $p_{ij}=0$.

Therefore, in the subsequent discussion, we may  assume that $P \neq J$ and all its entries are positive.
We show that $P$ must have certain symmetry properties.

\begin{lem} \label{rowsum}
For an optimal solution $P$, its row sums and column sums must all
equal $n$.
\end{lem}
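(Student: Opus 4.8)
The plan is to exploit the fact that multiplying a single row of $P$ by a positive constant changes $L(P)$ in a completely transparent way while preserving both positivity of the entries and the rank. First I would fix an optimal $P$, whose entries are all positive by the preceding paragraph, set $r_i=\sum_{j=1}^{n}p_{ij}$ for its row sums, and for positive reals $\lambda_1,\dots,\lambda_n$ consider the matrix $P'$ obtained by scaling the $i$-th row of $P$ by $\lambda_i$. Since $P'=\mathrm{diag}(\lambda_1,\dots,\lambda_n)\,P$, left multiplication by an invertible diagonal matrix cannot raise the rank, so $P'$ is again feasible provided its entries still sum to $n^2$, i.e. provided $\sum_i\lambda_i r_i=n^2$; this holds trivially at $\lambda_i\equiv 1$ because $\sum_i r_i=n^2$.

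The one real computation is that each $\lambda_i$ occurs in $L$ with total exponent $2+(n-1)=n+1$ (exponent $2$ from $p_{ii}$ and exponent $1$ from each of the other $n-1$ entries of row $i$), so that $L(P')=\bigl(\prod_{i=1}^{n}\lambda_i\bigr)^{n+1}L(P)$. Because $P$ is a global maximum this forces $\prod_i\lambda_i\le 1$ for every positive tuple $(\lambda_i)$ satisfying the single linear constraint $\sum_i\lambda_i r_i=n^2$. I would then feed in the admissible choice $\lambda_i=n/r_i$, which is legitimate since $\sum_i(n/r_i)\,r_i=n^2$, to obtain $n^n\le\prod_i r_i$, and compare this with the AM--GM bound $\prod_i r_i\le(\tfrac1n\sum_i r_i)^n=n^n$. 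Equality in AM--GM then pins down $r_i=n$ for every $i$, and the mirror argument using right multiplication by $\mathrm{diag}(\mu_1,\dots,\mu_n)$ gives the statement for the column sums.

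I expect the genuinely substantive point — and the only place needing care — to be the legitimacy of the perturbation: one must check that row (resp. column) scaling keeps all entries positive and does not violate the rank constraint, so that $P'$ really competes with $P$; after that the argument is just a short constrained optimization plus AM--GM. For the version of the problem without a rank constraint one could instead note that $\log L$ is strictly concave on the simplex, hence has a unique maximizer, which must therefore respect the $(i,j)\mapsto(j,i)$ symmetry of $L$ and in fact equal $\tfrac{n}{n+1}(J+I)$; I would nonetheless prefer the scaling argument, since it invokes optimality only through these elementary rank-preserving moves and so carries over unchanged to the constrained setting needed in the later sections.
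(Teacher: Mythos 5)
Your proposal is correct and follows essentially the same approach as the paper's proof: scaling the $i$-th row by $n/s_i$ preserves rank and the sum constraint, multiplies $L$ by $\left(n^n/\prod_i s_i\right)^{n+1}$, and optimality together with AM--GM forces each row sum (and, by the mirror argument, each column sum) to equal $n$. The only difference is presentational --- you first derive the general constraint $\prod_i\lambda_i\le 1$ before specializing, whereas the paper plugs in $\lambda_i=n/s_i$ immediately.
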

\begin{proof}
Let $\sum \limits_{j=1}^n p_{ij}=s_i$. Then $\sum \limits_{i=1}^n
s_i=n^2$ and $\prod \limits_i s_i \le n^n$ with equality if and only
if $s_i=n$ for all $i$. Let $\bar{p}_{ij}=\frac{n}{s_i} p_{ij}$ and
$\bar{P}=(\bar{p}_{ij})_{n\times n}$. Then $\rank(\bar{P})=\rank(P)$
and $\sum \limits_{i,j} \bar{p}_{ij}=n^2$. However,
$$L(\bar{P})= L(P) \cdot \left(\frac{n^n}{\prod \limits_i s_i}\right)^{n+1}  \ge L(P)$$
 with equality if and only if
$s_i=n$ for all $i$. Since $P$ is a global maximum, $L(\bar{P}) \le
L(P)$. Therefore each row sum equals $n$.  Similarly, each column
sum equals $n$ as well.
\end{proof}

We shall express $P$ in a form that involves fewer variables and has
no rank constraint. Since $P$ has rank at most two, by singular
value decomposition theorem, there are column vectors $u_{1}, u_{2},
v_{1}$ and $v_{2}$ of length $n$ such that
\[P=\sigma_{1} u_{1}v_{1}^{t} +\sigma_{2} u_{2}v_{2}^{t}\]
for some nonnegative numbers $\sigma_{1}$ and $\sigma_{2}$. By Proposition \ref{rowsum}, $P$ has equal
row and column sums, so $P$ has the vectors $(1, 1, \ldots, 1)$ and $(1, 1, \ldots, 1)^{t}$ as its left and right
eigenvectors both with eigenvalue $1$. Hence we may assume that $\sigma_{1}=1$ and $u_{1}= v_{1}= (1, 1, \ldots, 1)^{t}$.
Let $v_{2}=(a_{1}, a_{2}, \ldots, a_{n})^{t}$ and $\sigma_{2 }u_{2}=(b_{1}, b_{2}, \ldots, b_{n})^{t}$.
Then $P$ has the form
\begin{equation} \nonumber
 P=
J+\begin{pmatrix}
b_1\\
\vdots\\
b_n\\
\end{pmatrix}
    (a_{1}, a_{2}, \ldots, a_{n})\\
=\begin{pmatrix}
    1+a_1 b_1 & \cdots & 1+a_n b_1 \\
    \vdots &  1+a_i b_j & \vdots \\
    1+a_1 b_n & \cdots & 1+a_n b_n \\
  \end{pmatrix}.
\end{equation}
In this form, $P$ has rank at most two. Also, the condition $\sum
\limits_{ij}p_{ij}=n^2$ becomes
\begin{equation}\label{aibizero}
\sum \limits_{i=1}^n a_i \cdot \sum \limits_{i=1}^n b_i =0.
\end{equation}

We have transformed the original problem to the following
optimization problem:
\begin{center}
\begin{tabular}{ll}
Maximize:   &  $l(P)=2\sum \limits_{i=1}^n \ln(1+a_ib_i)  + \sum \limits_{i \not=j}\ln(1+a_ib_j)$ \\
Subject to:  & Equation (\ref{aibizero}) and  $1 + a_{i}b_{j} > 0$,
$1\leq i, j \leq n$.
\end{tabular}
\end{center}
The Lagrangian function would be
\begin{equation} \nonumber \label{lf}
\Lambda(P,\lambda)=l(P)+\lambda \sum \limits_{i=1}^n a_i \cdot \sum
\limits_{i=1}^n b_i
\end{equation}
where $\lambda \in \mathbb{R}$. Any local extrema must satisfy
\begin{equation} \label{diffa}
\frac{\partial \Lambda(P,\lambda)}{\partial a_i}=\sum_{j=1}^n
\frac{b_j}{1+a_i b_j}+\frac{b_i}{1+a_i b_i}+\lambda \sum_{j=1}^n
b_j=0, \ \ 1 \leq i \leq n,
\end{equation}
and
\begin{equation} \label{diffb}
\frac{\partial \Lambda(P,\lambda)}{\partial b_j}=\sum_{i=1}^n
\frac{a_i}{1+a_i b_j}+\frac{a_j}{1+a_j b_j}+\lambda \sum_{i=1}^n
a_i=0, \ \ 1 \leq j \leq n.
\end{equation}

By Lemma \ref{rowsum}, for an optimal solution $P$, its row sums and
column sums must be all equal to $n$. This means that
\begin{equation} \label{zero1}
\sum_{i=1}^n a_i=0,
\end{equation}
and
\begin{equation} \label{zero2}
\sum_{i=1}^n b_i=0.
\end{equation}
Plugging (\ref{zero1}) and (\ref{zero2}) into (\ref{diffa}) and
(\ref{diffb}) respectively, we obtain the following lemma.
\begin{lem} \label{diff} A global maximum $P$ must satisfy
\begin{equation} \label{diff1a}
 \sum_{j=1}^n \frac{b_j}{1+a_i b_j}+\frac{b_i}{1+a_i b_i}=0, \ \ 1
\leq i \leq n,
\end{equation}
and
\begin{equation} \label{diff1b}
\sum_{i=1}^n \frac{a_i}{1+a_i b_j}+\frac{a_j}{1+a_j b_j}=0, \ \ 1
\leq j\leq n.
\end{equation}
\end{lem}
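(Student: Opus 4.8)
The plan is to read off the two asserted families of equations directly from the first-order optimality conditions (\ref{diffa})--(\ref{diffb}), once the symmetry supplied by Lemma \ref{rowsum} has been expressed in the coordinates $a_i,b_j$. Recall from the discussion preceding Lemma \ref{rowsum} that an optimal $P$ lies in the interior of the feasible region, so every $p_{ij}=1+a_jb_i$ is strictly positive and $l(P)$ is smooth at such a $P$.

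First I would convert Lemma \ref{rowsum} into statements about $a$ and $b$. With $p_{rc}=1+a_c b_r$, the sum of row $r$ of $P$ is $n+b_r\sum_{c}a_c$ and the sum of column $c$ is $n+a_c\sum_{r}b_r$. Since $P\neq J$, not all of the $a_c$ vanish and not all of the $b_r$ vanish; hence requiring every row and column sum to equal $n$ forces $\sum_i a_i=0$ and $\sum_i b_i=0$, which are exactly (\ref{zero1}) and (\ref{zero2}). Substituting these two identities into (\ref{diffa}) and (\ref{diffb}) kills the terms $\lambda\sum_j b_j$ and $\lambda\sum_i a_i$, and what is left is precisely (\ref{diff1a}) and (\ref{diff1b}).

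The one delicate point is the justification of the stationarity conditions (\ref{diffa})--(\ref{diffb}) in the first place: the constraint function $g(a,b)=\bigl(\sum_i a_i\bigr)\bigl(\sum_i b_i\bigr)$ from (\ref{aibizero}) has gradient $\bigl(\sum_i b_i\bigr)\mathbf{1}$ in the $a$-variables and $\bigl(\sum_i a_i\bigr)\mathbf{1}$ in the $b$-variables, both of which vanish at a feasible point with $\sum a_i=\sum b_i=0$, so the standard constraint qualification for the Lagrange multiplier rule fails there. Rather than invoking the multiplier rule as a black box, I would argue with admissible perturbations: once we know $\sum_i b_i=0$, replacing $a$ by $a+t\delta$ with $b$ fixed and $\delta$ arbitrary keeps the pair feasible for all real $t$, and since the optimum is interior and $l$ is smooth there, the derivative of $l$ at $t=0$ must vanish for every $\delta$; this is exactly $\partial l/\partial a_i=0$ for all $i$, i.e.\ (\ref{diff1a}). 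Symmetrically, $\sum_i a_i=0$ makes arbitrary perturbations of $b$ admissible and gives (\ref{diff1b}); in particular the multiplier $\lambda$ turns out to be irrelevant. I expect this verification of differentiability along feasible directions to be the only real step, with the rest being the translation of Lemma \ref{rowsum} and a one-line substitution; as a sanity check one can confirm that the conjectured optimal $P$ has the required form with $\sum a_i=\sum b_i=0$ and satisfies (\ref{diff1a})--(\ref{diff1b}).
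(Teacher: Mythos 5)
Your proof is correct and follows the same skeleton as the paper's: translate Lemma \ref{rowsum} into $\sum_i a_i=0$ and $\sum_i b_i=0$ (equations (\ref{zero1}) and (\ref{zero2})), then read off the stationarity conditions. Where you depart from the paper is in how those stationarity conditions are justified. The paper introduces the Lagrangian $\Lambda(P,\lambda)=l(P)+\lambda\,(\sum a_i)(\sum b_i)$, asserts the multiplier equations (\ref{diffa})--(\ref{diffb}) as necessary for any local extremum, and then observes that the $\lambda$ terms vanish once (\ref{zero1})--(\ref{zero2}) are substituted. You replace this with a direct feasible-perturbation argument: since $\sum_i b_i=0$ at the optimum, every perturbation $a\mapsto a+t\delta$ keeps $(\sum a_i)(\sum b_i)=0$, and interiority keeps $1+a_ib_j>0$ for small $t$, so $\partial l/\partial a_i=0$ for all $i$; symmetrically for $b$. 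This is a genuine improvement in rigor, because you correctly note that the gradient of the constraint $g(a,b)=(\sum a_i)(\sum b_i)$ vanishes precisely at the points with $\sum a_i=\sum b_i=0$, so the standard constraint qualification underpinning the Lagrange multiplier rule fails there; your perturbation argument sidesteps that issue entirely and makes the multiplier $\lambda$ irrelevant rather than merely cancelling it. The end result is the same set of equations (\ref{diff1a})--(\ref{diff1b}), but your derivation is self-contained and does not rest on a multiplier rule whose hypotheses are not met.
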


Doing some simple algebra yields
\begin{cor} \label{diffcor} An optimal solution must satisfy
\begin{equation} \label{diff2a}
\sum_{j=1}^n \frac{1}{1+a_i b_j}+\frac{1}{1+a_i b_i}=n+1, \ \ 1 \leq
i \leq n,
\end{equation}
and
\begin{equation} \label{diff2b} \sum_{i=1}^n \frac{1}{1+a_i
b_j}+\frac{1}{1+a_j b_j}=n+1,  \ \ 1 \leq j\leq n.
\end{equation}
\end{cor}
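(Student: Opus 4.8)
The plan is to obtain Corollary \ref{diffcor} from Lemma \ref{diff} by an elementary rearrangement, using nothing beyond the two families of equations (\ref{diff1a}) and (\ref{diff1b}) together with the positivity constraint $1 + a_i b_j > 0$ attached to the optimization problem. The one device that makes everything go through cleanly is the identity $\frac{a_i b_j}{1+a_i b_j} = 1 - \frac{1}{1+a_i b_j}$, valid whenever $1+a_i b_j \neq 0$, equivalently $\frac{1}{1+a_i b_j} = 1 - a_i\,\frac{b_j}{1+a_i b_j}$.

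First I would fix an index $i$ and rewrite the sum to be controlled: summing $\frac{1}{1+a_i b_j} = 1 - a_i\,\frac{b_j}{1+a_i b_j}$ over $j$ gives $\sum_{j=1}^n \frac{1}{1+a_i b_j} = n - a_i \sum_{j=1}^n \frac{b_j}{1+a_i b_j}$. Next I would feed in equation (\ref{diff1a}), which says precisely that $\sum_{j=1}^n \frac{b_j}{1+a_i b_j} = -\frac{b_i}{1+a_i b_i}$, so that $\sum_{j=1}^n \frac{1}{1+a_i b_j} = n + \frac{a_i b_i}{1+a_i b_i}$. Applying the identity once more to the leftover term $\frac{a_i b_i}{1+a_i b_i} = 1 - \frac{1}{1+a_i b_i}$ collapses this to $\sum_{j=1}^n \frac{1}{1+a_i b_j} + \frac{1}{1+a_i b_i} = n+1$, which is (\ref{diff2a}). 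Running the identical argument with the roles of the $a$'s and $b$'s interchanged — fix $j$, start from $\frac{1}{1+a_i b_j} = 1 - b_j\,\frac{a_i}{1+a_i b_j}$, and invoke (\ref{diff1b}) — produces (\ref{diff2b}).

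The derivation is purely formal, so there is no real obstacle; the only subtlety worth flagging is the temptation to divide (\ref{diff1a}) through by $a_i$ to isolate $\sum_{j} \frac{1}{1+a_i b_j}$, which would be illegitimate if some $a_i$ vanished. Working instead through $\frac{a_i b_j}{1+a_i b_j} = 1 - \frac{1}{1+a_i b_j}$ avoids any division, and the constraint $1 + a_i b_j > 0$ guarantees that every denominator appearing is safely nonzero. The constant $n+1$ on the right simply records the $n$ terms of $\sum_{j=1}^n 1$ plus the one extra diagonal term.
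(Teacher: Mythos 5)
Your derivation is correct and is essentially the same as the paper's one-line proof: the paper multiplies (\ref{diff1a}) by $a_i$ and then adds $\sum_{j=1}^n \frac{1}{1+a_i b_j}+\frac{1}{1+a_i b_i}$ to both sides, which is exactly your identity $\frac{a_i b_j}{1+a_i b_j}=1-\frac{1}{1+a_i b_j}$ applied termwise. Your remark about not dividing by $a_i$ is sound but moot here, since the paper also only multiplies.
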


\begin{proof} Multiply (\ref{diff1a}) by $a_i$ and then add
$\sum \limits_{j=1}^n \frac{1}{1+a_i b_j}+\frac{1}{1+a_i b_i}$ to
both sides, we can get (\ref{diff2a}).
\end{proof}

The $2n$ equations derived by clearing denominators of the equations
in Lemma \ref{diff} or Corollary \ref{diffcor} along with equations
(\ref{zero1}) and (\ref{zero2}) form a system of $2n+2$ polynomial equations with $2n$ unknowns,
whose solutions contain all global maxima. From computational  point of view,  we may
find all the solutions to this system of equations, say utilizing Gr\"{o}bner basis method,
and then pick a global maximum.
At the time we submitted this paper (in 2008), we could not solve the system for $n=4$ using Maple
on a computer with moderate computation power.
With both the advance in computer hardware and efficient implementations of
algorithms for computing Gr\"{o}bner basis, the system for $n=4$ now became solvable.
A computational solution for this problem is attached in Section \ref{comp}.
However, the system for $n=5$ remains unsolvable using our computers.

Our strategy below is to prove that $P$ should have high symmetry.
Firstly $a_i$'s and $b_i$'s are in the same order: if $a_i>a_j>0$,
then $b_i>b_j>0$ correspondingly (Lemma \ref{sign} and \ref{order}).
For the case $n=4$ once we force $a_1=b_1$ by scaling, we can
eventually prove $a_i=b_i$ for all other $i$'s (Lemma \ref{a2} and
\ref{a3a4}). With four $a_i$'s remained, we prove that the $a_i$'s
with the same signs must be identical. Finally one can solve the
system by hand. Note that Fienberg et. al. \cite{FHRZ} derived results similar to
Lemmas \ref{sign} and \ref{order}, but our approaches are simpler and completely different.

\begin{lem} \label{sign} For every $i$,
\begin{enumerate}
\item $a_i=0$ if and only if $b_i=0$, and
\item $a_i>0$ if and only if $b_i>0$.
\end{enumerate}
\end{lem}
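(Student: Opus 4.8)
The plan is to use the first-order optimality conditions from Lemma \ref{diff}, specifically equations (\ref{diff1a}) and (\ref{diff1b}), together with the constraints (\ref{zero1}) and (\ref{zero2}), to pin down the relationship between the signs of $a_i$ and $b_i$. First I would prove part (1). Suppose $a_i=0$ for some $i$. Then equation (\ref{diff1a}) for that index reads $\sum_{j=1}^n b_j + b_i = 0$, and since $\sum_j b_j = 0$ by (\ref{zero2}), this forces $b_i=0$. Conversely, if $b_i=0$, then equation (\ref{diff1b}) for index $i$ reads $\sum_{i'} a_{i'} + a_i = 0$, which by (\ref{zero1}) gives $a_i=0$. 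So the two vanish together, and by symmetry of the argument the equivalence holds.

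For part (2), the idea is that once we know $a_i$ and $b_i$ share the property of being zero simultaneously, the sign correlation should follow from rewriting (\ref{diff1a}). Multiplying (\ref{diff1a}) through by $a_i$ (as in the proof of Corollary \ref{diffcor}) and using $\sum_j \tfrac{1}{1+a_ib_j} - (n-1)$ type manipulations, one can extract from it an expression of the form $\frac{a_i b_i}{1+a_ib_i} = -\sum_{j\neq i}\frac{a_i b_j}{1+a_ib_j} \cdot(\text{something})$, but a cleaner route is this: add $\frac{b_i}{1+a_ib_i}$ to both sides of the plain sum and compare with the column equation. Actually the most direct approach is to start from Corollary \ref{diffcor}: equation (\ref{diff2a}) says $\sum_{j}\frac{1}{1+a_ib_j} + \frac{1}{1+a_ib_i} = n+1$, i.e.\ $\sum_{j\neq i}\frac{1}{1+a_ib_j} + \frac{2}{1+a_ib_i} = n+1$, so $\sum_{j\neq i}\left(\frac{1}{1+a_ib_j}-1\right) = 1 - \frac{2}{1+a_ib_i} = \frac{a_ib_i - 1}{1+a_ib_i}$. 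Since $\frac{1}{1+a_ib_j} - 1 = \frac{-a_ib_j}{1+a_ib_j}$, this becomes $-a_i\sum_{j\neq i}\frac{b_j}{1+a_ib_j} = \frac{a_ib_i-1}{1+a_ib_i}$, which is just (\ref{diff1a}) again — so I must instead play the row equation against the column equation at the matching index, or exploit the constraint $\sum a_i = 0$ to control the sign of one distinguished term.

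The cleanest plan for part (2): fix $i$ with $a_i \neq 0$ (so $b_i \neq 0$ by part (1)). From (\ref{diff1a}), isolate the $j=i$ terms: $\frac{2b_i}{1+a_ib_i} = -\sum_{j \neq i}\frac{b_j}{1+a_ib_j}$. Multiply by $a_i$: $\frac{2a_ib_i}{1+a_ib_i} = -\sum_{j\neq i}\frac{a_ib_j}{1+a_ib_j} = \sum_{j\neq i}\left(\frac{1}{1+a_ib_j}-1\right) = \left(\sum_{j\neq i}\frac{1}{1+a_ib_j}\right) - (n-1)$. Each summand $\frac{1}{1+a_ib_j}$ is positive (by the feasibility constraint $1+a_ib_j>0$), so the right-hand side exceeds $-(n-1)$; more usefully, I should bound it \emph{above}. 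Here is where the argument needs its one genuine idea: use the AM–HM or convexity-type inequality on the positive quantities $1+a_ib_j$ whose ``average behaviour'' is controlled because $\sum_j a_ib_j = a_i\sum_j b_j = 0$, hence $\sum_{j}(1+a_ib_j) = n$ and $\sum_{j\neq i}(1+a_ib_j) = n - (1+a_ib_i)$. By the AM–HM inequality, $\sum_{j\neq i}\frac{1}{1+a_ib_j} \geq \frac{(n-1)^2}{n-1-a_ib_i}$. Combining, $\frac{2a_ib_i}{1+a_ib_i} \geq \frac{(n-1)^2}{n-1-a_ib_i} - (n-1)$; setting $x = a_ib_i$ and simplifying the rational inequality should force $x > 0$, i.e.\ $a_ib_i > 0$, which is exactly ``$a_i > 0 \iff b_i > 0$.''

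I expect the main obstacle to be verifying that the rational inequality in $x = a_ib_i$ genuinely has no feasible solution with $x < 0$ (one must be careful about the sign of the denominators $1+x$ and $n-1-x$ over the feasible range, and the case $n=2$ may behave differently and need separate handling), and more importantly making sure the AM–HM step is applied to quantities that are all positive — which requires knowing $1+a_ib_j>0$ for all $j$, available from feasibility, but the bound is only useful in the right direction if $n-1-a_ib_i > 0$, i.e.\ if $1+a_ib_i < n$, a fact that itself follows from (\ref{diff2a}) since every term there is positive. So the logical skeleton is: feasibility $\Rightarrow$ all $1+a_ib_j>0$; Corollary \ref{diffcor} $\Rightarrow$ $1+a_ib_i < n$; AM–HM on the $n-1$ off-diagonal terms; substitute into the isolated form of (\ref{diff1a}); clear denominators and show the resulting polynomial inequality in $a_ib_i$ is violated unless $a_ib_i>0$. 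An alternative fallback, if the inequality route is messy, is a perturbation argument: if some $a_i>0$ but $b_i<0$, construct a small rank-$\leq 2$ perturbation of $P$ (adjusting $a_i$ and compensating via the constraint) that strictly increases $l(P)$, contradicting optimality; this trades the algebraic inequality for a calculus-of-variations computation.
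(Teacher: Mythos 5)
Your part (1) is exactly the paper's proof: set $a_i=0$ in (\ref{diff1a}) and use (\ref{zero2}) to conclude $b_i=0$, and symmetrically for the converse.

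For part (2), your final plan uses the same essential idea as the paper — a convexity/AM--HM bound on reciprocals of the quantities $1+a_ib_j$, whose sum is controlled because $a_i\sum_j b_j=0$ — but your execution is a more roundabout version. The paper applies Jensen to the \emph{full} row sum: since $\sum_{j=1}^n(1+a_ib_j)=n$, convexity of $1/x$ gives $\sum_{j=1}^n\frac{1}{1+a_ib_j}\ge n$ in one line, and comparing with (\ref{diff2a}) immediately yields $\frac{1}{1+a_ib_i}\le 1$, i.e.\ $a_ib_i\ge 0$. You instead isolate the $j=i$ term, apply AM--HM to the $n-1$ off-diagonal terms, and arrive at a rational inequality in $x=a_ib_i$ that you leave unverified (``should force $x>0$''; ``I expect the main obstacle to be verifying\ldots''). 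That step does in fact close: the inequality $\frac{2x}{1+x}\ge\frac{(n-1)^2}{n-1-x}-(n-1)$ simplifies to $\frac{2x}{1+x}\ge\frac{(n-1)x}{n-1-x}$; feasibility gives $1+x>0$ and $n-1-x>0$, so if $x<0$, dividing by $x$ flips the inequality to $\frac{2}{1+x}\le\frac{n-1}{n-1-x}$, and cross-multiplying gives $n-1\le(n+1)x<0$, a contradiction for all $n\ge 2$ (so no special case for $n=2$). In short, your argument is sound but longer; the paper's choice of applying Jensen to all $n$ terms rather than the $n-1$ off-diagonal ones avoids the rational inequality entirely, because the full sum has mean exactly $1$.
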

\begin{proof} For the first part,  plugging in $a_i=0$ to the equation (\ref{diff1a}),
we have $\sum\limits_{j=1}^n b_j+b_i=0$, thus $b_i=0$. Similarly, if
$b_i=0$ then $a_i=0$.

\noindent For the second part ,  note that $g(x)=\frac{1}{x}$ is concave up in $(0,\infty)$. By Jensen's Inequality,
$$\sum_{j=1}^n \frac{1}{n} \cdot \frac{1}{1+a_ib_j}\ge \frac{1}{\sum \limits _{j=1}^n \frac{1}{n} (1+a_ib_j)}=1.$$
That is,
\begin{equation*}
\sum \limits _{j=1}^n \frac{1}{1+a_ib_j} \ge n.
\end{equation*} Compare with equation
(\ref{diff2a}), we get
$$\frac{1}{1+a_i b_i} \le 1,$$
so $a_i b_i \ge 0$. We conclude that $a_i>0$ if and only if
$b_i>0$.
\end{proof}

\begin{lem} \label{order} For $i$ and $j$,
\begin{enumerate}
\item $a_i=a_j$ if and only if $b_i=b_j$, and
\item $a_i>a_j$ if and only if $b_i>b_j$.
\end{enumerate}
\end{lem}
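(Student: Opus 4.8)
The plan is to eliminate the repeated term $\tfrac{1}{1+a_ib_i}$ that occurs in both families of equations in Corollary \ref{diffcor} and thereby collapse the system to a scalar statement about a single strictly convex function. Put $g(a)=\sum_{k=1}^n\frac{1}{1+ab_k}$ and $h(b)=\sum_{k=1}^n\frac{1}{1+a_kb}$. Equation (\ref{diff2a}) with index $i$ reads $\frac{1}{1+a_ib_i}=n+1-g(a_i)$, while equation (\ref{diff2b}) with $j=i$ reads $\frac{1}{1+a_ib_i}=n+1-h(b_i)$; subtracting gives the key identity
\begin{equation} \nonumber
g(a_i)=h(b_i),\qquad 1\le i\le n .
\end{equation}

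Next I would record the analytic properties of $g$ (those of $h$ are identical by the symmetry between the $a$'s and the $b$'s). Since $1+a_ib_k>0$ for all $k$, each $a_i$ lies in $D=\{a\in\mathbb{R}:1+ab_k>0\text{ for all }k\}$, an open interval containing $0$. On $D$ we have $g'(a)=-\sum_k\frac{b_k}{(1+ab_k)^2}$ and $g''(a)=\sum_{k}\frac{2b_k^2}{(1+ab_k)^3}>0$, strictly, because the $b_k$ are not all zero (else $P=J$); hence $g$ is strictly convex on $D$. Moreover $g'(0)=-\sum_k b_k=0$ by (\ref{zero2}), so $g$ has its unique minimum $g(0)=n$ at $a=0$, is strictly decreasing on $D\cap(-\infty,0]$, strictly increasing on $D\cap[0,\infty)$, and in particular injective on each of these two half-intervals.

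From here both claims follow by tracking signs with Lemma \ref{sign}. Part (1) is in fact immediate without $g,h$: if $a_i=a_j$ then subtracting (\ref{diff2a}) for $i$ and $j$ gives $\frac{1}{1+a_ib_i}=\frac{1}{1+a_ib_j}$, i.e.\ $a_ib_i=a_ib_j$, so $b_i=b_j$ when $a_i\neq0$, and $b_i=b_j=0$ when $a_i=0$ by Lemma \ref{sign}; the converse is symmetric. For part (2), suppose $a_i>a_j$. If $a_i,a_j$ have opposite signs or one of them is $0$, then by Lemma \ref{sign} the same holds for $b_i,b_j$, and $b_i>b_j$ is immediate. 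Otherwise $a_i,a_j$ have a common nonzero sign; strict monotonicity of $g$ on the relevant half-interval gives $g(a_i)>g(a_j)$ if they are positive and $g(a_i)<g(a_j)$ if they are negative. By the key identity this is the same inequality between $h(b_i)$ and $h(b_j)$; since $b_i,b_j$ share that sign (Lemma \ref{sign}) and $h$ is strictly monotone in the corresponding direction on the half-interval containing them, we conclude $b_i>b_j$. The reverse implication $b_i>b_j\Rightarrow a_i>a_j$ holds by symmetry.

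The one genuinely new step is the first: realizing that the two versions of Corollary \ref{diffcor} combine into the single identity $g(a_i)=h(b_i)$ with $g,h$ strictly convex and minimized at $0$. After that everything is sign bookkeeping via Lemma \ref{sign}. The points that need care are that $g'(0)=0$ — this is exactly where constraint (\ref{zero2}) enters — and that $g''>0$ strictly, which is where the hypothesis $P\neq J$ is used; together these force the minimum to sit at $0$ and give the injectivity on each side that the argument relies on.
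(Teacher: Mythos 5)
Your proof is correct, and it takes a genuinely different route from the paper for part~(2). The paper proves part~(1) much as you do (subtracting a pair of first-order conditions, though using Lemma~\ref{diff} rather than Corollary~\ref{diffcor}), but for part~(2) it uses a global-maximality comparison: swap $b_i$ and $b_j$ to get a feasible $\bar P$, then show $L(P)-L(\bar P)=C_2\,(a_i-a_j)(b_i-b_j)$ with $C_2>0$, so $(a_i-a_j)(b_i-b_j)\ge 0$, which combined with part~(1) gives strict monotonicity. Your argument replaces this with the observation that combining (\ref{diff2a}) and (\ref{diff2b}) at a common index yields $g(a_i)=h(b_i)$, where $g(a)=\sum_k\frac{1}{1+ab_k}$ and $h(b)=\sum_k\frac{1}{1+a_kb}$ are strictly convex on the relevant open intervals, minimized at $0$ thanks to (\ref{zero1})--(\ref{zero2}), hence strictly monotone on each half-interval; sign bookkeeping via Lemma~\ref{sign} then finishes it. The tradeoffs: the paper's swap argument is shorter and more elementary, but it invokes global maximality of $P$ over all feasible rank-$\le 2$ matrices; your convexity argument uses only the stationarity conditions of Corollary~\ref{diffcor} plus Lemma~\ref{sign} (itself a first-order consequence), so it establishes the order-preservation at \emph{every} critical point with zero row/column sums, not just at a global maximizer. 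That is a modest gain in generality and a nice structural insight — $g$ and $h$ coupled through the diagonal term — though at the cost of a somewhat longer analytic setup.
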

\begin{proof}
\noindent For the first part, suppose $b_{i}=b_{j}$. Then, by (\ref{diff1b}),
$$\sum_{k=1}^n \frac{a_k}{1+a_k b_i}+\frac{a_i}{1+a_i b_i}=0 \textrm{ and } \sum_{k=1}^n \frac{a_k}{1+a_k
b_j}+\frac{a_j}{1+a_j b_j}=0.$$ Then $\frac{a_i}{1+a_i
b_i}=\frac{a_j}{1+a_j b_j}$, so $a_{i}=a_{j}$. Then, using (\ref{diff1a}), we have $b_{i}=b_{j}$.

\noindent For the second part, switch $b_i,b_j$ in $P$ to form a new matrix
$\bar{P}$. Then we should have $L(P)\ge L(\bar{P})$ due to our assumption that $P$ is a global maximum. Note that
\begin{eqnarray*}
L(P)- L(\bar{P})
&=& C_{1}\cdot ((1+a_i b_i)^2 (1+a_i b_j) (1+a_j b_i) (1+a_j b_j)^2\\
&& -(1+a_i b_j)^2 (1+a_i b_i) (1+a_j b_j) (1+a_j b_i)^2) \\
&=& C_2 \cdot ((1+a_i b_i)(1+a_j b_j)-(1+a_i b_j)(1+a_j b_i))\\
&=& C_{2} \cdot (a_i b_i + a_j b_j-a_i b_j-a_j b_i)\\
&=& C_2  \cdot (a_i-a_j)(b_i-b_j)
\end{eqnarray*}
where $C_1,C_2$ are products of some entries of $P$, so $C_1, C_2$ are positive. Thus
$(a_i-a_j)(b_i-b_j)\ge 0$. Note that $a_i=a_j$ if and only if
$b_i=b_j$ by part(1), we conclude that $a_i>a_j$ if and only if
$b_i>b_j$. \qedhere
\end{proof}

\subsection{Theoretical solution}\label{handwr}

We complete the theoretical proof for the conjecture in this section. From now
on we focus on the case when $n=4$. By Lemma \ref{order}, we can
always assume $a_1 \ge a_2 \ge a_3 \ge a_4$ and $b_1 \ge b_2 \ge b_3
\ge b_4$. We know $a_1 \neq 0$, otherwise $b_1 = 0$ by Lemma \ref{sign}, hence $a_i=b_j=0$, which result in $P=J$. We also have $\frac{a_1}{b_1}>0$, so we can replace
$(a_1,a_2,a_3,a_4)$ in $P$ by
$\sqrt{\frac{a_1}{b_1}}(a_1,a_2,a_3,a_4)$ and
$(b_1,b_2,b_3,b_4)^{t}$ by
$\sqrt{\frac{b_1}{a_1}}(b_1,b_2,b_3,b_4)^{t}$. It turns out that $1+
\sqrt{\frac{a_1}{b_1}}a_i \sqrt{\frac{b_1}{a_1}}b_i=1+a_ib_j$ for
any $i$ and $j$, so we may always assume $a_1=b_1$. Thus $P$ can be
expressed as the form
\begin{equation} \label{a1b1}
\begin{pmatrix}
    1+a_1^2 & 1+a_2 a_1 & 1+a_3 a_1 & 1+a_4 a_1 \\
    1+a_1 b_2 & 1+a_2 b_2 & 1+a_3 b_2 & 1+a_4 b_2 \\
    1+a_1 b_3 & 1+a_2 b_3 & 1+a_3 b_3 & 1+a_4 b_3 \\
    1+a_1 b_4 & 1+a_2 b_4 & 1+a_3 b_4 & 1+a_4 b_4 \\
  \end{pmatrix}.
\end{equation}

If $a_2 \le 0$, we then replace $(a_1,a_2,a_3,a_4)$ in $P$ by
$(-a_4,-a_3,-a_2,-a_1)$ and $(b_1,b_2,b_3,b_4)^t$ by
$(-b_4,-b_3,-b_2,-b_1)^t$. The new matrix with $-a_4 \ge -a_3 \ge 0$
has the same likelihood function as $P$. Thus we may assume $a_1 \ge
a_2 \ge 0$. Without loss of generality, we may make the following
assumption.
\begin{asp} \label{asp} We can always assume the following
\begin{enumerate}
\item $a_1 \ge a_2 \ge a_3 \ge a_4$ and $b_1 \ge b_2 \ge b_3 \ge
b_4$,
\item $a_1=b_1>0$, and
\item $a_1 \ge a_2 \ge 0$.
\end{enumerate}
\end{asp}

The results in the rest of this section are all based on Assumption
\ref{asp}. Our first goal is to prove $a_2=b_2$.

\begin{lem} \label{a2} $a_2=b_2$.
\end{lem}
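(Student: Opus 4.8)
The plan is to exploit the same swapping trick used in Lemma~\ref{order}, but applied now to the pair of \emph{rows} and \emph{columns} indexed by $1$ and $2$ simultaneously, together with the normalization $a_1=b_1$ from Assumption~\ref{asp}. The idea is that, under Assumption~\ref{asp}, the first row and first column are governed by the single parameter $a_1=b_1$, while the second row and second column involve $a_2$ and $b_2$ respectively. I would first record the consequences of Corollary~\ref{diffcor} for the indices $i=1,2$ and $j=1,2$: equations~(\ref{diff2a}) and~(\ref{diff2b}) give four scalar identities in the $a_i,b_j$. Subtracting the $i=1$ instance of~(\ref{diff2a}) from the $j=1$ instance of~(\ref{diff2b}) should produce, after using $a_1=b_1$, a relation that forces a symmetry between the ``row-$1$'' data and the ``column-$1$'' data; similarly comparing the $i=2$ and $j=2$ instances should isolate the difference $a_2-b_2$.

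Concretely, I would form the difference
\[
\Bigl(\sum_{j=1}^{4}\frac{1}{1+a_2 b_j}+\frac{1}{1+a_2 b_2}\Bigr)
-\Bigl(\sum_{i=1}^{4}\frac{1}{1+a_i b_2}+\frac{1}{1+a_2 b_2}\Bigr)=0,
\]
which collapses to $\sum_{j=1}^{4}\bigl(\tfrac{1}{1+a_2 b_j}-\tfrac{1}{1+a_j b_2}\bigr)=0$, i.e.
\[
\sum_{j=1}^{4}\frac{(a_2-a_j)b_2-(b_2-b_j)a_2\cdot(\text{stuff})}{(1+a_2 b_j)(1+a_j b_2)}
=\sum_{j=1}^{4}\frac{b_j-b_2\;-\;(a_j-a_2)\,??}{\cdots}=0.
\]
The cleaner route is to write each summand as
$\tfrac{1}{1+a_2b_j}-\tfrac{1}{1+a_jb_2}=\tfrac{a_jb_2-a_2b_j}{(1+a_2b_j)(1+a_jb_2)}$
and note $a_jb_2-a_2b_j=b_2(a_j-a_2)-a_2(b_j-b_2)$. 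For $j=1$, using $a_1=b_1$ the numerator is $a_1(b_2-a_2)$; for $j=2$ it vanishes; for $j=3,4$ one keeps the general form. The goal is to show the whole sum has the sign of $(a_2-b_2)$ times a manifestly positive quantity, so that the equation $=0$ forces $a_2=b_2$. To pin down the signs of the $j=3,4$ contributions I would invoke Assumption~\ref{asp}(1) (the common ordering of the $a$'s and $b$'s) and Lemma~\ref{sign}, perhaps splitting into the cases according to which of $a_3,a_4$ are nonnegative; in each case the ordering should make the cross terms line up.

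An alternative, and possibly shorter, argument is a direct likelihood comparison: consider the matrix $\bar P$ obtained from $P$ by \emph{transposing the first two coordinates}, i.e.\ replacing $(a_1,a_2)\leftrightarrow(b_1,b_2)$ while keeping $a_3,a_4,b_3,b_4$ fixed (note this respects the constraint~(\ref{aibizero}) since, with $a_1=b_1$, it only permutes the roles of $a_2$ and $b_2$ and $\sum a_i=\sum b_i=0$ by~(\ref{zero1})--(\ref{zero2}) is unaffected only if $a_2=b_2$; so one must instead use a one-parameter deformation $a_2(t)=a_2+t$, $b_2(t)=b_2-t$ keeping everything else fixed, which preserves neither sum individually but does preserve $\sum a_i\cdot\sum b_i$ to first order since both sums are $0$). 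Then $\tfrac{d}{dt}l(P(t))\big|_{t=0}=0$ by optimality, and computing this derivative from~(\ref{diff1a})--(\ref{diff1b}) should give exactly a positive multiple of $(a_2-b_2)$, whence $a_2=b_2$. I expect the main obstacle to be bookkeeping the signs of the $j=3,4$ terms: one needs the monotone pairing of Lemma~\ref{order} together with the positivity $1+a_ib_j>0$ to guarantee the denominators and the residual numerators do not conspire to cancel the leading $(a_2-b_2)$ term, and handling the subcase where $a_3$ or $a_4$ is negative (so $b_3,b_4<0$ by Lemma~\ref{sign}) will require a small case analysis rather than a single uniform estimate.
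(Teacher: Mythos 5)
Your proposal has a genuine gap, and it lies precisely where you suspect it might: the $j=3,4$ cross terms. When you form the difference of the $i=2$ instance of~(\ref{diff2a}) and the $j=2$ instance of~(\ref{diff2b}), the $j=1$ term does collapse to $\dfrac{a_1(b_2-a_2)}{(1+a_1a_2)(1+a_1b_2)}$ thanks to $a_1=b_1$, and the $j=2$ term vanishes, but the remaining two summands contain the numerators $a_3b_2-a_2b_3$ and $a_4b_2-a_2b_4$. These involve four \emph{independent} unknowns $a_3,a_4,b_3,b_4$ that are constrained only by the monotone ordering of Lemma~\ref{order} and the sign pairing of Lemma~\ref{sign}; neither of these forces $a_jb_2-a_2b_j$ to have a sign governed by $a_2-b_2$. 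Splitting into cases by the signs of $a_3,a_4$ does not repair this: within each case the two cross numerators can take either sign, because the quantities $a_3-b_3$ and $a_4-b_4$ are still undetermined. Your alternative one-parameter deformation $a_2(t)=a_2+t$, $b_2(t)=b_2-t$ is, after taking the derivative at $t=0$, exactly the difference of~(\ref{diff1a}) and~(\ref{diff1b}) at index $2$, and runs into the same cross-term obstruction.

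The missing idea in the paper's proof is \emph{elimination} before comparison. Using the row-$1$ instance of Corollary~\ref{diffcor} together with $\sum_j a_1a_j=0$, one solves for the symmetric functions of $a_1a_3$ and $a_1a_4$ purely in terms of $a_1^2$ and $a_1a_2$, producing equation~(\ref{eqn2}) with the auxiliary function $f_1$; doing the same with row~$2$ gives~(\ref{eqn3}). Combining these eliminates $a_3,a_4$ entirely and yields a single polynomial relation $f_2(a_1,a_2,b_2)=0$. Repeating column-wise gives $f_2(a_1,b_2,a_2)=0$, and subtracting produces the factorization $(a_2-b_2)\,f_3(a_1,a_2,b_2)=0$. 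The second missing ingredient is then quantitative: one must show $f_3\neq 0$, which the paper does via the explicit bounds of Lemma~\ref{bound} ($a_1^2\le\tfrac12$, $0\le a_1a_2\le\tfrac15$, $0\le a_1b_2\le\tfrac15$) to force $f_3<0$. Your proposal has no analogue of either step, so there is no route from the sign bookkeeping you describe to the conclusion $a_2=b_2$.
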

\begin{proof} If one of $a_2,b_2$ is 0, then $a_2=b_2=0$ by Lemma \ref{sign}.
We assume that both $a_2,b_2$ are nonzero.

Apply Corollary \ref{diffcor} to the first row of matrix
(\ref{a1b1}). We have
\begin{equation}\nonumber
\frac{2}{1+a_1^2}+\frac{1}{1+a_2 a_1}+\frac{1}{1+a_3
a_1}+\frac{1}{1+a_4 a_1}=5.
\end{equation}
Also
\begin{equation}\nonumber
a_1^2+a_2a_1+a_3a_1+a_4a_1=0.
\end{equation}
From the two equations above we get
\begin{equation} \label{eqn2}
a_3a_1 \cdot a_4a_1=f_1(a_1a_1,a_1a_2)
\end{equation}
where $f_1$ is a bivariate function in $x,y$ defined as
$$f_1(x,y)=\frac{2-x-y}{5-\frac{2}{1+x}-\frac{1}{1+y}}+x+y-1.
$$
Similarly, apply Corollary \ref{diffcor} to the second row of matrix
(\ref{a1b1}). We get
\begin{equation}\nonumber
\frac{1}{1+a_1 b_2}+\frac{2}{1+a_2 b_2}+\frac{1}{1+a_3
b_2}+\frac{1}{1+a_4 b_2}=5.
\end{equation}
Along with
\begin{equation}\nonumber
a_1b_2+a_2b_2+a_3b_2+a_4b_2=0,
\end{equation}
we get
\begin{equation} \label{eqn3}
a_3b_2 \cdot a_4b_2=f_1(a_2b_2,a_1b_2).
\end{equation}
Since $a_1,b_2$ are nonzero, we combine equations (\ref{eqn2}) and
(\ref{eqn3}) to get
\begin{equation} \label{eqn4}
\frac{f_1(a_1^2,a_1a_2)}{a_1^2}=\frac{f_1(a_2b_2,a_1b_2)}{b_2^2}.
\end{equation}
Normalizing (\ref{eqn4}) we can derive a trivariate polynomial
equation, say
\begin{equation} \label{eqn5}
f_2(a_1,a_2,b_2)=0.
\end{equation}

Symmetrically apply Corollary \ref{diffcor} to the first column and
the second column \ref{a1b1}, we get
\begin{equation}\label{eqn6}
\frac{f_1(a_1^2,a_1b_2)}{a_1^2}=\frac{f_1(a_2b_2,a_1a_2)}{a_2^2}.
\end{equation}
One can see that equation (\ref{eqn6}) is obtainable by switching
$a_2$ with $b_2$ in equation (\ref{eqn4}). Thus we have
\begin{equation} \label{eqn7}
f_2(a_1,b_2,a_2)=0.
\end{equation}
Subtracting (\ref{eqn7}) from (\ref{eqn5}) yields
\begin{equation}\nonumber\label{a2-b2}
f_2(a_1,a_2,b_2)-f_2(a_1,b_2,a_2)=0.
\end{equation}
Since we only switched $a_2$ and $b_2$ in polynomial $f_2$, there
must be a factor $a_2-b_2$ for $f_2(a_1,a_2,b_2)-f_2(a_1,b_2,a_2)$,
say
\begin{equation} \label{a2b2}
(a_2-b_2)f_3(a_1,a_2,b_2)=0,
\end{equation}
where
\begin{align*}
f_3(a_1,a_2,b_2)=&\ \ \ \ \ (20a_1^4b_2^2+15a_1^3b_2+3a_1^2b_2^2+2a_1b_2-4b_2^2)a_2^2\\
&\ +(3a_1^4b_2+15a_1^3b_2^2+2a_1^3+10a_1^2b_2+2a_1b_2^2-3a_1-b_2)a_2\\
&\ -4a_1^4+2a_1^3b_2-a_1^2-3a_1b_2-2.
\end{align*}

Thus $a_2=b_2$ if $f_3(a_1,a_2,b_2) \neq 0$. This is true because we
have some bounds for $a_1^2,a_1a_2,a_1b_2$ as presented in Lemma
\ref{bound} below, which can be applied to get
\begin{align*}
f_3(a_1,a_2,b_2)=&\ \ \ \ \ (20a_1^4b_2^2+15a_1^3b_2+3a_1^2b_2^2+2a_1b_2-4b_2^2)a_2^2\\
&\ +(3a_1^4b_2+15a_1^3b_2^2+2a_1^3+10a_1^2b_2+2a_1b_2^2-3a_1-b_2)a_2\\
&\ -4a_1^4+2a_1^3b_2-a_1^2-3a_1b_2-2\\
<&\ \ \ \ \ \frac{20}{5^4}+\frac{15}{5^3}+\frac{3}{4}a_2^2b_2^2+\frac{2}{5}a_2b_2-4a_2^2b_2^2\\
&\ +\frac{3}{2^2 5^2}+\frac{15}{5^3}+\frac{2}{2^2 5}+\frac{10}{5^2}+\frac{2}{5}a_2b_2-a_2b_2\\
&\ +\frac{2}{2^25}-2\\
<&\ -\frac{13}{4}a_2^2b_2^2-\frac{1}{5}a_2b_2-\frac{549}{500}\\
<&\ \ 0.
\end{align*}
Therefore, $f_3(a_1,a_2,b_2) \neq 0$ and $a_2=b_2$, just as
needed.\qedhere
\end{proof}

\begin{lem}\label{bound}\
\begin{enumerate} \item $a_1^2 \le \frac{1}{2}$,
\item $0 \le a_1 a_2 \le \frac{1}{5}$, and
\item $0 \le a_1 b_2 \le \frac{1}{5}$.
\end{enumerate}
\end{lem}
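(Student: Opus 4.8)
The plan is to derive all three bounds from the equation obtained by applying Corollary \ref{diffcor} to the first row of the matrix (\ref{a1b1}),
\[
\frac{2}{1+a_1^2}+\frac{1}{1+a_1a_2}+\frac{1}{1+a_1a_3}+\frac{1}{1+a_1a_4}=5,
\]
together with $a_1+a_2+a_3+a_4=0$; for part (3) the analogous first–column equation is used instead. Write $x=a_1^2$, $u=a_1a_2$, $v=a_1a_3$, $w=a_1a_4$, so this becomes
\begin{equation}\label{planrow}
\frac{2}{1+x}+\frac{1}{1+u}+\frac{1}{1+v}+\frac{1}{1+w}=5,\qquad u+v+w=-x .
\end{equation}
From Assumption \ref{asp} we have $a_1>0$ and $0\le a_2\le a_1$, so $0\le u\le x$; positivity of the $(1,3)$ and $(1,4)$ entries of (\ref{a1b1}) gives $v>-1$ and $w>-1$, hence $2-x-u=2+v+w>0$ and in particular $x<2$. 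Set $\psi(t)=\dfrac{1}{1+t}+\dfrac{4}{2-x-t}$; a one–line derivative computation shows that, where $2-x-t>0$, the sign of $\psi'(t)$ is that of $3t+x$, so $\psi$ is strictly increasing for $t\ge 0$.

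For (1) I would apply convexity of $s\mapsto\frac{1}{1+s}$ on $(-1,\infty)$ to $v,w$: since $v+w=-x-u$,
\[
\frac{1}{1+v}+\frac{1}{1+w}\ \ge\ \frac{2}{1+\frac{v+w}{2}}=\frac{4}{2-x-u}.
\]
Plugging this into (\ref{planrow}) gives $5\ge\frac{2}{1+x}+\psi(u)\ge\frac{2}{1+x}+\psi(0)=\frac{2}{1+x}+1+\frac{4}{2-x}$, i.e.\ $\frac{2}{1+x}+\frac{4}{2-x}\le 4$. Since $0<x<2$, clearing denominators turns this into $2x^2\le x$, hence $x=a_1^2\le\frac12$.

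For (2), the lower bound $0\le a_1a_2$ is immediate from $a_1>0$, $a_2\ge 0$. For the upper bound I would split on the size of $x$. If $x\le\frac15$ then $a_1a_2=u\le x\le\frac15$. If $\frac15<x\le\frac12$, then (\ref{planrow}) and the convexity estimate above give $\psi(u)\le 5-\frac{2}{1+x}$, while a direct computation shows
\[
\psi\!\left(\tfrac15\right)-\left(5-\tfrac{2}{1+x}\right)=\frac{2}{1+x}+\frac{4}{\tfrac95-x}-\frac{25}{6}
\]
has the sign of $125x^2-40x+3=125\bigl(x-\tfrac15\bigr)\bigl(x-\tfrac{3}{25}\bigr)$, which is strictly positive for $x>\frac15$. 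Hence $\psi(\tfrac15)>5-\frac{2}{1+x}\ge\psi(u)$, and since $\psi$ is strictly increasing on $[0,x]$ this forces $u<\frac15$. Part (3) follows the same route from the first–column equation $\frac{2}{1+x}+\frac{1}{1+a_1b_2}+\frac{1}{1+a_1b_3}+\frac{1}{1+a_1b_4}=5$ with $a_1(b_2+b_3+b_4)=-x$: Lemma \ref{sign} together with $a_2\ge 0$ gives $b_2\ge 0$, Assumption \ref{asp} gives $b_2\le b_1=a_1$, and positivity of the $(3,1)$ and $(4,1)$ entries of (\ref{a1b1}) gives $a_1b_3,a_1b_4>-1$, so the triple $(a_1b_2,a_1b_3,a_1b_4)$ satisfies exactly the hypotheses imposed on $(u,v,w)$ above.

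The genuine subtlety is confined to parts (2) and (3): the bound $a_1a_2\le\frac15$ is attained at the conjectured maximizer (where $a_1=a_2$ and $a_1^2=\frac15$, forcing equality in the Jensen step), so it cannot come from a convexity estimate alone — for $x$ in the window $\bigl(\tfrac{3}{25},\tfrac15\bigr)$ the Jensen bound by itself does not yield $u\le\tfrac15$, which is precisely why one must bring in the order relation $a_2\le a_1$ there and split the argument at $x=\frac15$. Everything else is elementary: the monotonicity of $\psi$ and the factorization of $125x^2-40x+3$; part (1) is routine once the convexity trick is in place.
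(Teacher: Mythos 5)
Your proof is correct and is essentially the same as the paper's: your variables $x,u,v,w$ are the paper's $A_i-1$, your Jensen step is the paper's AM--HM bound on $\frac{1}{A_3}+\frac{1}{A_4}$, your $\psi$ is the paper's auxiliary function $g$, and your quadratic $125x^2-40x+3$ with roots $\tfrac{3}{25},\tfrac15$ is precisely the paper's $125A_1^2-290A_1+168$ with roots $\tfrac{28}{25},\tfrac{6}{5}$ after the shift. The only cosmetic differences are that you prove monotonicity of $\psi$ by factoring $\psi'$ rather than crude termwise bounds, and you phrase part (2) as a case split on $x\lessgtr\tfrac15$ where the paper argues by contradiction from $A_2>\tfrac65$; your closing remark correctly identifies the role of the ordering $a_2\le a_1$ (equivalently $A_1\ge A_2$) that the paper also uses at exactly that point.
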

\begin{proof}(1) Let $A_i=1+a_1a_i$ for $i=1,\ldots,4$, then
$\sum \limits _{i=1}^4 A_i=4$, $A_1 \ge A_2 \ge 1$, $A_3 \ge A_4>0$
and
$$\frac{2}{A_1}+\frac{1}{A_2}+\frac{1}{A_3}+\frac{1}{A_4}=5.$$ Since
$$\frac{1}{A_3}+\frac{1}{A_4} \ge \frac{4}{A_3+A_4}=\frac{4}{4-A_1-A_2},$$
we have
\begin{equation} \label{A2} 5=\frac{2}{A_1}+\frac{1}{A_2}+\frac{1}{A_3}+\frac{1}{A_4} \ge
\frac{2}{A_1}+\frac{1}{A_2}+\frac{4}{4-A_1-A_2}.
\end{equation}
Let
$$g(A_2)=\frac{1}{A_2}+\frac{4}{4-A_1-A_2},$$ where $g$ is a
function in $\mathbb{R}[x]$. Then
$$\frac{\partial g(A_2)}{\partial A_2}=-\frac{1}{A_2^2}+\frac{4}{(4-A_1-A_2)^2}.$$
Note that $A_1 \ge A_2 \ge 1$, thus $4-A_1-A_2 \le 2$ and
$\frac{\partial g(A_2)}{\partial A_2} \ge 0$. Therefore $g(A_2) \ge
g(1)$ for $A_2 \ge 1$, that is,
$$\frac{1}{A_2}+\frac{4}{4-A_1-A_2} \ge 1+\frac{4}{3-A_1}.$$ Hence
by inequality (\ref{A2}),
$$5 \ge
\frac{2}{A_1}+\frac{1}{A_2}+\frac{4}{4-A_1-A_2} \ge
\frac{2}{A_1}+1+\frac{4}{3-A_1}.$$ We get $2A_1^2-5A_1+3 \le 0$,
i.e. $1 \le
A_1 \le \frac{3}{2}$. Thus $a_1^2 \le \frac{1}{2}$. \\

(2) Assume $A_2=1+a_1a_2 >\frac{6}{5}$. Then $g(A_2) >
g(\frac{6}{5})$. That is,
$$5 \ge
\frac{2}{A_1}+\frac{1}{A_2}+\frac{4}{4-A_1-A_2} >
\frac{2}{A_1}+\frac{5}{6}+\frac{4}{\frac{14}{5}-A_1}.$$ The solution
set of $A_1$ is $(-\infty,0)\cup (\frac{28}{25},\frac{6}{5}) \cup
(\frac{14}{5}, \infty)$. Note that $A_1>0$ and $A_1=1+a_1^2 \le
\frac{3}{2}$, we then get $\frac{28}{25}<A_1<\frac{6}{5}$, which
contradicts with $A_1 \ge A_2$.
Thus $A_2 \le \frac{6}{5}$ and $0 \le a_1a_2 \le \frac{1}{5}$.\\

(3) This result is followed by letting $A_1=1+a_1^2$ and
$A_i=1+a_1b_i$ for $i \ge 2$. The above proofs in part (1) and (2)
remain good.
\end{proof}

\begin{lem}\label{a3a4} $a_i=b_i$ for $i=3,4$.
\end{lem}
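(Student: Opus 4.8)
The plan is to extract from Corollary \ref{diffcor} a closed formula for the product $a_3a_4$ depending only on $a_1,a_2$, and the analogous formula for $b_3b_4$ depending only on $a_1,b_2$; Lemma \ref{a2} then makes these two formulas coincide, so $a_3a_4=b_3b_4$. Since (\ref{zero1}), (\ref{zero2}) together with $a_1=b_1$ and $a_2=b_2$ also give $a_3+a_4=b_3+b_4$, the pairs $\{a_3,a_4\}$ and $\{b_3,b_4\}$ will be the two roots of one and the same quadratic, and the ordering in Assumption \ref{asp} will finish the proof.

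In detail, I would first apply Corollary \ref{diffcor} to the first row of the matrix (\ref{a1b1}), exactly as equation (\ref{eqn2}) was obtained in the proof of Lemma \ref{a2}: using $b_1=a_1$ this row gives
\[
\frac{2}{1+a_1^2}+\frac{1}{1+a_1a_2}+\frac{1}{1+a_1a_3}+\frac{1}{1+a_1a_4}=5 ,
\]
and combining it with $a_1^2+a_1a_2+a_1a_3+a_1a_4=0$, which follows from (\ref{zero1}), and eliminating the sum $a_1a_3+a_1a_4$ yields
\[
a_3a_4=\frac{f_1(a_1^2,\,a_1a_2)}{a_1^2}.
\]
Applying Corollary \ref{diffcor} to the first column of (\ref{a1b1}) instead, and using (\ref{zero2}) in place of (\ref{zero1}), the same computation gives
\[
b_3b_4=\frac{f_1(a_1^2,\,a_1b_2)}{a_1^2}.
\]
Both steps are legitimate because $a_1\neq 0$ and every $1+a_ib_j$ is a positive entry of $P$; the only quantity one must check to be nonzero is the denominator $5-\tfrac{2}{1+a_1^2}-\tfrac{1}{1+a_1a_2}$ hidden inside $f_1$, and the bounds $a_1^2\le\tfrac12$ and $0\le a_1a_2\le\tfrac15$ from Lemma \ref{bound} make it at least $2$.

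Finally, Lemma \ref{a2} gives $a_2=b_2$, so $a_1a_2=a_1b_2$ and the right-hand sides of the last two displayed identities are literally the same expression; hence $a_3a_4=b_3b_4$. Combined with $a_3+a_4=-(a_1+a_2)=-(b_1+b_2)=b_3+b_4$, this shows that $a_3,a_4$ and $b_3,b_4$ are the two roots, counted with multiplicity, of $t^2-(a_3+a_4)t+a_3a_4$, so $\{a_3,a_4\}=\{b_3,b_4\}$ as multisets; since $a_3\ge a_4$ and $b_3\ge b_4$ by Assumption \ref{asp}, this forces $a_3=b_3$ and $a_4=b_4$. I do not expect a genuine obstacle here: once $a_2=b_2$ is available, the symmetry between rows and columns in Corollary \ref{diffcor} does all the work, and in contrast to Lemma \ref{a2} no delicate estimate for an auxiliary polynomial is needed.
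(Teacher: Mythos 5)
Your proposal is correct and is essentially the same argument as the paper's: the paper expresses $A_3=1+a_3b_1$ and $A_4=1+a_4b_1$ as functions of $A_1,A_2$ via the row-sum and reciprocal-sum identities (using the ordering $A_3\ge A_4$ to pick the branch), then identifies them with $B_3=1+a_1b_3$, $B_4=1+a_1b_4$ once $A_1=B_1$ and $A_2=B_2$ are known from Lemma \ref{a2}. Your version just phrases this via Vieta's formulas, computing $a_3+a_4$, $a_3a_4$ and $b_3+b_4$, $b_3b_4$ directly and matching the multisets; the content and the inputs (Corollary \ref{diffcor}, Lemma \ref{a2}, Assumption \ref{asp}) are identical.
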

\begin{proof} Let $A_i=1+a_ib_1$ for $i=1,\ldots,4$. Then $$\sum \limits _{i=1}^4 A_i=4$$ and
$$\frac{2}{A_1}+\frac{1}{A_2}+\frac{1}{A_3}+\frac{1}{A_4}=5.$$
By the two equations above, since $A_3 \ge A_4$, we can derive
explicit expression for $A_3,A_4$ in the variables $A_1,A_2$, say
$A_3=h_1(A_1,A_2)$ and $A_4=h_2(A_1,A_2)$. If we let $B_i=1+a_1b_i$,
we can get $B_3=h_1(B_1,B_2)$ and $B_4=h_2(B_1,B_2)$ in a similar
manner. Note that $A_1=B_1$ and $A_2=1+a_2b_1=1+b_2a_1=B_2$, we
deduce that $A_i=B_i$ for $i=3,4$. Since $a_1=b_1>0$, $a_i=b_i$ for
$i=3,4$.
\end{proof}

By Lemmas \ref{a2} and \ref{a3a4}, we have $a_i=b_i$ for all $i$.
Hence $P$ can be expressed as
$$P=\begin{pmatrix}
    1+a_1^2 & 1+a_2 a_1 & 1+a_3 a_1 & 1+a_4 a_1 \\
    1+a_1 a_2 & 1+a_2^2 & 1+a_3 a_2 & 1+a_4 a_2 \\
    1+a_1 a_3 & 1+a_2 a_3 & 1+a_3^2 & 1+a_4 a_3 \\
    1+a_1 a_4 & 1+a_2 a_4 & 1+a_3 a_4 & 1+a_4^2 \\
  \end{pmatrix}
$$
where
\begin{equation} \label{zeroa}
\sum_{i=1}^4 a_i=0.
\end{equation}

By Corollary \ref{diffcor} we have the following system of equations

\begin{equation} \label{system}
\left\{ \begin{aligned} \frac{2}{1+a_1^2}+\frac{1}{1+a_2
a_1}+\frac{1}{1+a_3
a_1}+\frac{1}{1+a_4 a_1}&=5,\\
\frac{1}{1+a_1 a_2}+\frac{2}{1+a_2^2}+\frac{1}{1+a_3
a_2}+\frac{1}{1+a_4 a_2}&=5, \\
\frac{1}{1+a_1 a_3}+\frac{1}{1+a_2 a_3}+\frac{2}{1+a_3^2
}+\frac{1}{1+a_4 a_3}&=5, \\
\frac{1}{1+a_1 a_4}+\frac{1}{1+a_2 a_4}+\frac{1}{1+a_3
a_4}+\frac{2}{1+a_4^2}&=5.
\end{aligned} \right.
\end{equation}
With (\ref{zeroa}) and (\ref{system}), we claim that
\begin{lem} \label{final} $a_i=a_j$ if $a_i a_j>0$.
\end{lem}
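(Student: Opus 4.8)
The plan is to compress the data in (\ref{system}) and (\ref{zeroa}) into one scalar relation for each pair of indices and then read off a contradiction whenever two coordinates of $a$ are unequal but share a sign. First I would rewrite the $i$-th equation of (\ref{system}) as $\frac{1}{1+a_i^2}+\sum_{k=1}^{4}\frac{1}{1+a_ka_i}=5$ (absorbing one copy of $\frac{1}{1+a_i^2}$ into the sum). Subtracting the $j$-th such equation from the $i$-th, the cross terms $\frac{1}{1+a_ia_j}$ cancel by the symmetry $1+a_ia_j=1+a_ja_i$, and after dividing by $a_i-a_j$ one obtains, for every pair with $a_i\neq a_j$,
\begin{equation*}
\frac{2(a_i+a_j)}{(1+a_i^2)(1+a_j^2)}+\sum_{k\neq i,j}\frac{a_k}{(1+a_ia_k)(1+a_ja_k)}=0.\tag{$\ast$}
\end{equation*}
Since $n=4$ the sum in $(\ast)$ has exactly two terms, and since every entry $1+a_ka_\ell$ of $P$ is positive, each of these summands has the same sign as $a_k$ (Lemma \ref{sign}).

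By transitivity and the ordering $a_1\ge a_2\ge a_3\ge a_4$, it suffices to prove $a_m=a_{m+1}$ for each consecutive pair of equal sign, namely $(1,2)$ and $(2,3)$ among the positive coordinates and $(3,4)$ among the negative ones (no other consecutive pair can share a sign under Assumption \ref{asp}). The substitution $(a_1,a_2,a_3,a_4)\mapsto(-a_4,-a_3,-a_2,-a_1)$ preserves (\ref{system}) and (\ref{zeroa}) and takes the negative case $(3,4)$ to the positive case $(1,2)$, so it is enough to rule out $a_1>a_2>0$ and (when $a_3>0$) $a_2>a_3>0$. Consider the first: apply $(\ast)$ with $(i,j)=(1,2)$. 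For $k=3,4$ the factors $1+a_1a_k$ and $1+a_2a_k$ are $\ge 1$ when $a_k\ge0$ and lie in $(0,1)$ when $a_k<0$, so in either case $\frac{a_k}{(1+a_1a_k)(1+a_2a_k)}\le a_k$; summing over $k=3,4$ and using $a_3+a_4=-(a_1+a_2)$ in $(\ast)$ gives $\frac{2(a_1+a_2)}{(1+a_1^2)(1+a_2^2)}\ge a_1+a_2>0$, that is, $(1+a_1^2)(1+a_2^2)\le 2$.

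This inequality alone is not yet a contradiction, so I would feed it back into the first row equation of (\ref{system}), together with the bounds $1+a_1^2\le 3/2$ and $1+a_1a_2\le 6/5$ from Lemma \ref{bound}, and---exactly in the style of the proof of Lemma \ref{bound}---extract from the remaining ``reciprocals summing to five'' equations a lower bound on the smallest entry $1+a_1a_4$, hence a two-sided control of every entry $1+a_1a_k$. A short case split on the sign of $a_3$ (using $a_4=-(a_1+a_2+a_3)$), combined with $(\ast)$ for $(1,2)$ and, when $a_3<0$, with the analogue of $(\ast)$ for $(3,4)$, should then make the estimates numerically inconsistent; the pair $(2,3)$ is treated the same way using the second and third row equations. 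This forces $a_1=a_2$, and with it $a_2=a_3$ and $a_3=a_4$ in the relevant sign ranges, which is the lemma.

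The delicate point is this last numerical closure, not any of the algebra. Identity $(\ast)$ by itself cannot settle the question, because the single negative summand can in principle exactly cancel the positive leading term; a contradiction appears only once the sizes of $1+a_1^2$, $1+a_1a_2$ and---above all---the least entry $1+a_1a_4$ are all pinned down, and obtaining a sharp enough lower bound on $1+a_1a_4$ from (\ref{system}) and $\sum_i a_i=0$ needs the same careful case-by-case bookkeeping as in Lemma \ref{bound}. (One could instead invoke the second-order optimality condition along the feasible direction $(-1,1,0,0)$ in $(a_1,a_2,a_3,a_4)$-space, which reduces to an explicit Hessian estimate but still rests on the bounds of Lemma \ref{bound}.)
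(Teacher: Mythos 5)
Your algebraic preliminaries are correct: the pairwise-difference identity $(\ast)$ does follow from subtracting rows of (\ref{system}), and taking $(i,j)=(1,2)$, bounding $\frac{a_k}{(1+a_1a_k)(1+a_2a_k)}\le a_k$ in both sign cases, and using $a_3+a_4=-(a_1+a_2)$ yields $(1+a_1^2)(1+a_2^2)\le 2$ whenever $a_1>a_2>0$. But that is where the proof ends and the plan begins. The claimed ``numerical closure''---feeding this bound back into (\ref{system}), extracting a lower bound on $1+a_1a_4$, splitting on the sign of $a_3$, and combining with $(\ast)$ for the pair $(3,4)$---is not actually carried out, as you yourself acknowledge in the final paragraph. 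That closure \emph{is} the content of the lemma, and it is far from clear it would work as described: $(1+a_1^2)(1+a_2^2)\le 2$ is very slack (at the actual optimizer $a_1=a_2=1/\sqrt5$ it reads $36/25\le 2$), and the additional one-sided estimates you propose to stack on top of it are equally slack. So there is a genuine gap, and it is the essential one.

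The paper closes the argument by a structural, estimate-free device. Set
\[
F(x)=\sum_{k=1}^4\frac{1}{1+a_kx}+\frac{1}{1+x^2}-5,
\]
so that the $i$-th equation of (\ref{system}) reads $F(a_i)=0$. Clearing denominators, the numerator of $F$ is a degree-$6$ polynomial whose constant term and $x$-coefficient both vanish (the latter because of (\ref{zeroa})), so its roots are exactly $a_1,a_2,a_3,a_4,0,0$. If two consecutive $a_i>a_j>0$ were distinct, the intermediate value theorem applied between the poles $-1/a_j$ and $-1/a_i$ of $F$ would produce a further root $a_0<-1/a_i<0$; being nonzero, $a_0$ must equal some $a_k$, and then $1+a_ia_0$ would be a negative entry of $P$---a contradiction. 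This root-counting argument uses the exact algebraic structure of (\ref{system}) rather than one-sided bounds, which is why it closes cleanly. Note also that your passage from (\ref{system}) to $(\ast)$, by cancelling the $a_ia_j$ terms across two rows, discards precisely the global root information that the paper's argument exploits, which is one reason the inequality route is so hard to finish.
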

\begin{proof} Let
\begin{equation}\nonumber \label{fx}
F(x)=\frac{1}{1+a_1 x}+\frac{1}{1+a_2 x}+\frac{1}{1+a_3
x}+\frac{1}{1+a_4 x}+\frac{1}{1+x^2}-5=0.
\end{equation}
Normalizing $F(x)$ yields a polynomial (the numerator) of degree 6
in $x$ whose constant is 0 and whose coefficient of the term $x$ is
$\sum \limits_{i=1}^4 a_i=0$. So $a_1,a_2,a_3,a_4,0,0$ are all the
zeros of $F(x)$. Suppose there exists consecutive $i,j$ such that
$a_i > a_j>0$ (or $a_j < a_i<0$ respectively). Then $F(x)$ is
continuous in the interval $(-\frac{1}{a_j},-\frac{1}{a_i})$. Note
that
$$\lim \limits_{x
\rightarrow -\frac{1}{a_j}^+}F(x)= \infty\ \textrm{ and } \lim
\limits_{x \rightarrow -\frac{1}{a_i}^-} F(x)= -\infty.$$
There must be a zero lying in $(-\frac{1}{a_j},-\frac{1}{a_i})$, say
$a_0$. Then $a_0 < -\frac{1}{a_i}$ (or $a_0 > -\frac{1}{a_j}$
respectively), i.e. $1+a_i a_0 <0$ (or $1+a_j a_0 <0$ respectively).
Since $a_0 \neq 0$, $x_{0}$ must be one of $a_k$, $k=1,\ldots,4$.
Thus $1+a_i a_0$ (or $1+a_j a_0$, respectively) is an entry in
matrix $P$, contradicting the fact that each entry of $P$ is
positive. Therefore if $i,j$ are consecutive and $a_i a_j>0$, we
must have $a_i=a_j$. Hence $a_i a_j>0$ implies $a_i=a_j$ for any
$i,j$.
\end{proof}

With Lemma \ref{final} it is handy to solve the system
(\ref{system}). Under Assumption (\ref{asp}) there are only 4
possible patterns of signs for $(a_1,a_2,a_3,a_4)$. If the signs are
$(+,+,+,-)$, then $a_1=a_2=a_3=-\frac{1}{3}a_4$. Substitute this to
any equation in (\ref{system}) yields
$a_1=a_2=a_3=\frac{1}{\sqrt{15}}$ and $a_4=-\frac{3}{\sqrt{15}}$.
The matrix would be
\begin{equation} \nonumber\label{m1}
P_1=\begin{pmatrix} \vspace{0.1cm}
  \frac{16}{15} & \frac{16}{15} & \frac{16}{15} & \frac{4}{5} \\
  \vspace{0.1cm}
  \frac{16}{15} & \frac{16}{15} & \frac{16}{15} & \frac{4}{5} \\
  \vspace{0.1cm}
  \frac{16}{15} & \frac{16}{15} & \frac{16}{15} & \frac{4}{5} \\
  \frac{4}{5} & \frac{4}{5} & \frac{4}{5} & \frac{8}{5}\\
\end{pmatrix}.
\end{equation}

For the case when the signs are $(+,+,-,-)$, we get
$a_1=\frac{1}{\sqrt{5}}$ and the matrix would be
\begin{equation} \label{max}
P_2=\begin{pmatrix} \nonumber\vspace{0.1cm}
  \frac{6}{5} & \frac{6}{5} & \frac{4}{5} & \frac{4}{5} \\
  \vspace{0.1cm}
  \frac{6}{5} & \frac{6}{5} & \frac{4}{5} & \frac{4}{5} \\
  \vspace{0.1cm}
  \frac{4}{5} & \frac{4}{5} & \frac{6}{5} & \frac{6}{5} \\
  \frac{4}{5} & \frac{4}{5} & \frac{6}{5} & \frac{6}{5} \\
\end{pmatrix}.
\end{equation}

When the signs are $(+,+,0,-)$, $a_1=\frac{1}{2\sqrt{2}}$, and the
matrix would be
\begin{equation} \nonumber\label{m2}
P_3=\begin{pmatrix}\vspace{0.1cm}
  \frac{9}{8} & \frac{9}{8} & 1 & \frac{3}{4} \\\vspace{0.1cm}
  \frac{9}{8} & \frac{9}{8} & 1 & \frac{3}{4} \\\vspace{0.1cm}
  1 & 1 & 1 & 1 \\
  \frac{3}{4} & \frac{3}{4} & 1 & \frac{3}{2} \\

\end{pmatrix}.
\end{equation}

And when the signs are $(+,0,0,-)$, $a_1=\frac{1}{\sqrt{3}}$ and the
matrix would be
\begin{equation} \nonumber\label{m3}
P_4=\begin{pmatrix}\vspace{0.1cm}
  \frac{4}{3} & 1 & 1 & \frac{2}{3} \\\vspace{0.1cm}
  1 & 1 & 1 & 1 \\\vspace{0.1cm}
  1 & 1 & 1 & 1 \\
  \frac{2}{3} & 1 & 1 & \frac{4}{3}\\
\end{pmatrix}.
\end{equation}

The matrices obtaining local maximum of the likelihood function must
be among the matrices above. We conclude that matrix $P_2$ obtains
the global maximum. Finally, multiplying matrix $P_2$ by
$\frac{1}{16}$ yields
\begin{equation}\nonumber
P= \frac{1}{40}
\begin{pmatrix}\vspace{0.1cm}
  3 & 3 & 2 & 2 \\\vspace{0.1cm}
  3 & 3 & 2 & 2 \\\vspace{0.1cm}
  2 & 2 & 3 & 3 \\
  2 & 2 & 3 & 3 \\
\end{pmatrix}.
\end{equation}

\subsection{Approach via Gr\"{o}bner bases}\label{comp}

Gr\"{o}bner basis technique is a general approach for solving systems of equations.
Buchberger introduced in 1965  the first algorithm for computing Gr\"{o}bner basis (see \cite{Buc85}),
 and subsequently there have been extensive efforts in improving its efficiency.  It is not  our purpose here to give a detailed survey of all the algorithms
 in the literature, but we mention two important algorithms F4 (Faug\`{e}re 1999, \cite{Fau99}) and F5 (Faug\`{e}re 2002, \cite{Fau02}) where signatures are introduced to detect useless S-pairs without performing reductions.  F5 is believed to be the fastest algorithm in the last decade.  Most recently, Gao,
Guan and Volny (2010, \cite{GGV10}) introduced an incremental algorithm (G2V) that is simpler and several times faster than F5,
and  Gao, Volny  and Wang (2010, \cite{GVW10}) developed a more general algorithm that avoids the  incremental nature of F5 and G2V and is flexible in signature orders.  All these algorithms are for general polynomial systems.
If a large system of polynomials  have certain structures, it is not known how to use these algorithms to take advantage of the structures of the polynomial system.

After we submitted our paper (in 2008),  one of the referees pointed out that  it is possible to compute the Gr\"{o}bner basis for our polynomial system
with $n=4$. We give more details on this computation.  The solution starts from Equations
(\ref{zero1}-\ref{diff1b}), using the scaling at of the beginning of Section\ref{handwr}. Without the scaling the solutions are infinite. For this one needs to assume $a_1=b_1 \ne 0$. Note that this assumption relies on Lemmas \ref{sign} and \ref{order} we proved. It takes about ten minutes for the whole computation in Maple on a moderate computer.

Precisely, one can construct an ideal
\begin{equation}\nonumber \mathcal{J}_0=\langle a_1-b_1, \sum_{i=1}^4 a_i, \sum_{i=1}^4 b_i, h_1,\cdots,h_8\rangle \subset \mathbb{C}[X]
\end{equation}
where $h_i$ is a numerator on the left hand side of Equations \ref{diff1a},\ref{diff1b}, $\mathbb{C}$ is the complex field  and $X$ represents the list of unknowns: $a_1,\cdots, a_4, b_1,\cdots, b_4$. Let
\[ \mathcal{J}_1=\mathcal{J}_0+\langle 1-u\cdot a_1\rangle \subset \mathbb{C}[X,u]\]
 where $u$ is a new variable. Then $a_1\ne 0$ for any solution of $\mathcal{J}_1$. We compute the Gr\"obner basis $G_1$ of $\mathcal{J}_1$
 in an elimination term order with $u>X$. Let $G_2 = G_1\cap \mathbb{C}[X]$. Then $G_2$ is a Gr\"obner basis of $\mathcal{J}_1 \cap \mathbb{C}[X]$.
 Now $\langle G_2 \rangle$ is a zero-dimensional ideal, and its rational univariate representation can be computed.  In this step, a univariate polynomial $r(v)$ with a new variable $v$ is computed, whose roots can represent all the solutions of $\langle G_2 \rangle$. It has degree of 398, with 56 real roots. By substituting each real root to the representations, there are 18 roots making that some entries of $P$ equal 0 thus $L(P)=0$. Each of the remaining solutions
 gives one of the following:
 \begin{align*}
P_1=\begin{pmatrix} \vspace{0.1cm}
  \frac{16}{15} & \frac{16}{15} & \frac{16}{15} & \frac{4}{5} \\
  \vspace{0.1cm}
  \frac{16}{15} & \frac{16}{15} & \frac{16}{15} & \frac{4}{5} \\
  \vspace{0.1cm}
  \frac{16}{15} & \frac{16}{15} & \frac{16}{15} & \frac{4}{5} \\
  \frac{4}{5} & \frac{4}{5} & \frac{4}{5} & \frac{8}{5}\\
\end{pmatrix},  \quad \quad
 P_2=\begin{pmatrix} \vspace{0.1cm}
  \frac{6}{5} & \frac{6}{5} & \frac{4}{5} & \frac{4}{5} \\
  \vspace{0.1cm}
  \frac{6}{5} & \frac{6}{5} & \frac{4}{5} & \frac{4}{5} \\
  \vspace{0.1cm}
  \frac{4}{5} & \frac{4}{5} & \frac{6}{5} & \frac{6}{5} \\
  \frac{4}{5} & \frac{4}{5} & \frac{6}{5} & \frac{6}{5} \\
\end{pmatrix}, \\
 P_3=\begin{pmatrix}\vspace{0.1cm}
  \frac{9}{8} & \frac{9}{8} & 1 & \frac{3}{4} \\\vspace{0.1cm}
  \frac{9}{8} & \frac{9}{8} & 1 & \frac{3}{4} \\\vspace{0.1cm}
  1 & 1 & 1 & 1 \\
  \frac{3}{4} & \frac{3}{4} & 1 & \frac{3}{2} \\
\end{pmatrix},   \quad \quad   \quad
P_4=\begin{pmatrix}\vspace{0.1cm}
  \frac{4}{3} & 1 & 1 & \frac{2}{3} \\\vspace{0.1cm}
  1 & 1 & 1 & 1 \\\vspace{0.1cm}
  1 & 1 & 1 & 1 \\
  \frac{2}{3} & 1 & 1 & \frac{4}{3}\\
\end{pmatrix},
\end{align*}
up to a permutation of variables $a_i$'s and $b_i$'s. It is straightforward to check that  $P_2$ is the optimal solution.

We also tried to the case for $n=5$, but our computation did not finish after more than one day,
mainly because the computation for the first Gr\"obner basis $G_1$ did not finish.
Gr\"{o}bner basis encodes both real and complex solutions. For our system with $n=4$, there are far more complex solutions than real solutions.
For a system of polynomials with finitely many complex solutions, it is expected that in general, the more solutions with  the system,
the harder to compute Gr\"{o}bner basis  (for any term order).  Also, even if a final Gr\"{o}basis is small,  the intermediate polynomials may be
large (in number of nonzero terms as well as the size of the coefficients), hence the algorithms can not finish in reasonable time in practice,
in fact, it's more likely that  the computer is out of memory quickly.
For our theoretical approach (by hand),  we were able to explore some partial structure
in our polynomial system. For example, we have a polynomial of the form $(a_2-b_2)f_3(a_1,a_2,b_2)$ in the
proof for Lemma \ref{a2}. Our approach is to justify  that the factor
$f_3(a_1,a_2,b_2)$, a trivariate polynomial with 17 terms, is
nonzero by applying some bounds from Lemma \ref{bound}, so that we
can derive the simplest equation $a_2-b_2=0$. In the proof we used
the fact that we are looking only for real solutions. However, it is
possible that $f_3(a_1,a_2,b_2)$ is zero for some complex solutions.
The locus of all solutions may be much more complicated than that of
real solutions, hence the Gr\"{o}bner basis is much more time
consuming to compute.

\section{Some comments on more general likelihood functions}\label{general}

In this section, we consider some generalization of the likelihood problem.  We  let
the exponent  in  the likelihood function (\ref{product}) be symbolic, and consider the function
\begin{equation}\label{newLP}
L(P)=\prod_{i=1}^n p_{ii}^s  \times \prod_{i \not= j}p_{ij}^t,
\end{equation}
where $P=(p_{ij})$ is still an $n \times n$ matrix as before. The question is how the optimal solution
depends on $(s,t)$. Even for the case when $n=4$, it seems hard to find the optimal solutions.
In the following, we describe some possible solutions in the form of conjectures.

\begin{cnj} For given $0<t<s$ where $t,s$ are two integers, among the set of all
non-negative $4 \times 4$ matrices whose rank is at most 2 and whose
entries sum to 1, the matrix
$$
P=\frac{1}{4s+12t}
\begin{pmatrix}
    \vspace{0.1in}
    \frac{s+t}{2} & \frac{s+t}{2} & t & t \\
    \vspace{0.1in}
    \frac{s+t}{2} & \frac{s+t}{2} & t & t \\
    \vspace{0.1in}
    t & t & \frac{s+t}{2} & \frac{s+t}{2} \\
    t & t & \frac{s+t}{2} & \frac{s+t}{2}
\end{pmatrix}$$
is a global maximum for the likelihood function L(P) in (\ref{newLP}) when $n=4$.
\end{cnj}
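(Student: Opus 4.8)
Set $c=s/t$, so $c>1$; dividing all exponents by $t$ we may take $L(P)=\prod_{i=1}^{4}p_{ii}^{c}\times\prod_{i\neq j}p_{ij}$, and the case $c=2$ is the theorem already proved. The plan is to rerun Sections~\ref{proof1} and~\ref{proof2} with the constant $2$ replaced throughout by $c$, checking that each inequality still closes. Lemma~\ref{rowsum} is unchanged (scaling row $i$ by $4/s_i$ multiplies $L$ by $(4/s_i)^{s+3t}$, and AM--GM forces every row and column sum to be $4$); the rank-two parametrization of Section~\ref{proof1}, with $\sum a_i=\sum b_i=0$, is untouched; Lemma~\ref{diff} becomes $\sum_j\frac{b_j}{1+a_ib_j}+(c-1)\frac{b_i}{1+a_ib_i}=0$, whence Corollary~\ref{diffcor} becomes $\sum_{j=1}^{4}\frac{1}{1+a_ib_j}+(c-1)\frac{1}{1+a_ib_i}=3+c$ (together with the companion family obtained by exchanging $a$ and $b$). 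Lemma~\ref{sign} survives (the only use of $c$ is ``$(c-1)b_i=0\Rightarrow b_i=0$'', valid since $c>1$), and Lemma~\ref{order} too: transposing $b_i,b_j$ now multiplies $L$ by $\bigl(\tfrac{(1+a_ib_i)(1+a_jb_j)}{(1+a_ib_j)(1+a_jb_i)}\bigr)^{s-t}$, and $s-t>0$ still gives $(a_i-a_j)(b_i-b_j)\ge0$. So Assumption~\ref{asp} and the matrix form~(\ref{a1b1}) are available as before.

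The core is to prove $a_i=b_i$ for all $i$. First I would redo Lemma~\ref{bound}. The estimate $\frac{1}{A_3}+\frac{1}{A_4}\ge\frac{4}{4-A_1-A_2}$ and the monotonicity of $g(A_2)=\frac{1}{A_2}+\frac{4}{4-A_1-A_2}$ on $[1,\infty)$ persist, and the resulting quadratic in $A_1$ is $(c+2)A_1^2-(4c+2)A_1+3c\le0$, whose discriminant is the perfect square $(2c-2)^2$, with roots $1$ and $\tfrac{3c}{c+2}$; hence $a_1^2\le\tfrac{2(c-1)}{c+2}$. A parallel computation with cutoff $A_2>\tfrac{2(c+1)}{c+3}$ (again a perfect-square discriminant) gives $0\le a_1a_2\le\tfrac{c-1}{c+3}$, and the same proof with $A_i=1+a_1b_i$ gives $0\le a_1b_2\le\tfrac{c-1}{c+3}$ (for $c=2$ these are $\tfrac12,\tfrac15,\tfrac15$). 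With these bounds the two-rows/two-columns elimination of Lemma~\ref{a2} proceeds verbatim: eliminating $a_3a_4$ from the row equations and $b_3b_4$ from the column equations produces a polynomial identity $f_2^{(c)}(a_1,a_2,b_2)=f_2^{(c)}(a_1,b_2,a_2)$, whose antisymmetry in the last two arguments forces $(a_2-b_2)\,f_3^{(c)}(a_1,a_2,b_2)=0$ for an explicit $f_3^{(c)}$ with coefficients polynomial in $c$. Provided $f_3^{(c)}\neq0$ on the box cut out by Lemma~\ref{bound}, this gives $a_2=b_2$; then Lemma~\ref{a3a4} runs unchanged ($A_3\ge A_4$ are the two roots of an explicit quadratic in $A_1,A_2$, and from $A_1=B_1$ and $A_2=1+a_1a_2=1+a_1b_2=B_2$ we get $a_3=b_3$, $a_4=b_4$), so $P$ is symmetric.

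Once $P$ is symmetric, the generalized Lemma~\ref{final} applies: the numerator of $F(x)=\sum_{i=1}^{4}\frac{1}{1+a_ix}+\frac{c-1}{1+x^2}-(3+c)$ still has degree $6$ with vanishing constant term and vanishing coefficient of $x$ (the latter $=\sum a_i$), so its six zeros are $a_1,a_2,a_3,a_4,0,0$; the sign analysis at the poles $x=-1/a_j$ and the contradiction with positivity of the entries of $P$ are unchanged, so $a_ia_j>0$ implies $a_i=a_j$. Under Assumption~\ref{asp} only the sign patterns $(+,+,+,-)$, $(+,+,-,-)$, $(+,+,0,-)$, $(+,0,0,-)$ survive; substituting each into the generalized form of system~(\ref{system}) together with $\sum a_i=0$ solves it explicitly, and $(+,+,-,-)$ gives $a_1=a_2=-a_3=-a_4$ with $a_1^2=\tfrac{c-1}{c+3}$, so $1+a_1^2=\tfrac{2(c+1)}{c+3}$ on the diagonal and on the two diagonal $2\times2$ blocks, and $1-a_1^2=\tfrac{4}{c+3}$ on the two off-diagonal blocks --- rescaled to entry-sum $1$, exactly the conjectured matrix (entries $\tfrac{s+t}{8(s+3t)}$ and $\tfrac{t}{4(s+3t)}$). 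It remains to evaluate $L$ at the four candidates (each a monomial in entries rational in $c$, e.g.\ $L=\bigl(\tfrac{2(c+1)}{c+3}\bigr)^{4c+4}\bigl(\tfrac{4}{c+3}\bigr)^{8}$ for the $(+,+,-,-)$ candidate) and to check that this value is the largest for every $c>1$; for $c=2$ this amounts to inequalities such as $(9/8)^{12}>4$.

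I expect the genuine obstacle to be showing $f_3^{(c)}\neq0$ on the relevant box, \emph{uniformly} in $c$. For $c=2$ the paper bounds each of the $17$ monomials of $f_3$ using the small numbers $\tfrac12,\tfrac15,\tfrac15$ and obtains a strictly negative sum; but the bounds of Lemma~\ref{bound} degrade to $a_1^2\lesssim2$ and $a_1a_2,a_1b_2\lesssim1$ as $c\to\infty$, so the crude monomial estimate cannot cover all $c$ at once and, a priori, $f_3^{(c)}$ might change sign. I would split the range of $c$: for $c$ in a bounded interval, perturb the $c=2$ estimate; for large $c$, use the order information $b_2\le b_1=a_1$ and $a_2\le a_1$ (so $a_2b_2\le a_1a_2\le a_1^2$, etc.) to pin the diagonal entries $1+a_k^2$ far closer to $2$ than Lemma~\ref{bound} alone allows, which should restore a definite sign for $f_3^{(c)}$. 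The concluding four-way likelihood comparison likewise reduces to one-variable inequalities in $c$ (routine by comparison, and plausible since the diagonal product of the $(+,+,-,-)$ candidate tends to $16$ while those of the other three tend to roughly $9.5$, $6.75$, $4$, the total off-diagonal exponent staying bounded), and I expect a careful but unsurprising argument there once $f_3^{(c)}$ is under control.
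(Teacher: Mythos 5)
The statement you are addressing is labeled a \emph{conjecture} in the paper, and the paper does not prove it either: Section~\ref{general} gives exactly the plan you outline (generalize Lemmas~\ref{rowsum}--\ref{final} with the exponent $2$ replaced by $c=s/t$, force $a_i=b_i$, classify sign patterns under Assumption~\ref{asp}, compare the four resulting candidates) and explicitly flags the same bottleneck you flag, writing that the analogue of the nonzero factor $f_3$ in the $a_2=b_2$ step ``is difficult to claim.'' Your filled-in computations are correct and go somewhat beyond what the paper actually writes down: the quadratic $(c+2)A_1^2-(4c+2)A_1+3c\le0$ has roots $1$ and $\tfrac{3c}{c+2}$, giving $a_1^2\le\tfrac{2(c-1)}{c+2}$; the companion cutoff gives $0\le a_1a_2,\ a_1b_2\le\tfrac{c-1}{c+3}$; the $(+,+,-,-)$ candidate has $a_1^2=\tfrac{c-1}{c+3}$, so block entries $\tfrac{2(c+1)}{c+3}$ and $\tfrac{4}{c+3}$, which rescale to the conjectured $\tfrac{s+t}{8(s+3t)}$ and $\tfrac{t}{4(s+3t)}$; and the likelihood value $\bigl(\tfrac{2(c+1)}{c+3}\bigr)^{4c+4}\bigl(\tfrac{4}{c+3}\bigr)^{8}$ reduces the $P_1$-versus-$P_2$ comparison at $c=2$ to $(9/8)^{12}>4$, all consistent with the paper. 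In short, you take essentially the same route as the paper, and you are also honest that it is a \emph{strategy}, not a proof: the open step --- a uniform-in-$c$ sign for $f_3^{(c)}$ on the box cut out by the generalized bounds, plus the four-way likelihood comparison for all $c>1$ --- is exactly the step the paper leaves open, so there is no ``paper proof'' for you to have matched or missed. Your remarks on how one might close the gap (perturbing off $c=2$ on compact ranges, using the order constraints $a_2\le a_1$, $b_2\le b_1$ for large $c$) are reasonable next moves, but they remain unexecuted.
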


The results in Section \ref{genprop} remain good for this likelihood
function. The equation (\ref{diff1b}) becomes
\begin{equation*}
\frac{b_1}{1+a_i b_1}+\frac{b_2}{1+a_i b_2}+\frac{b_3}{1+a_i
b_3}+\frac{b_4}{1+a_i b_4}+\frac{(\frac{s}{t}-1) a_i}{1+a_i b_i}=0.
\end{equation*}
But the bounds in Lemma \ref{bound} involve the fraction
$\frac{s}{t}$ and become complicated. A similar equation to
(\ref{a2b2}) can be derived, but the nonzero factor is difficult to
claim. Hopefully we may also prove $a_2=b_2$. Then $a_3=b_3$ and
$a_4=b_4$ can be derived in a similar manner to Lemma \ref{a3a4}. So
does Lemma \ref{final}. Finally we can find 4 local extrema and need
only compare them to obtain the global maximum. In the case when the
signs of $(a_1,a_2,a_3,a_4)$ are $(+,+,+,-)$, we have the equation
$$a_1^2((3s+9t)a_1^2-(s-t))=0.$$ Thus
$a_1=\sqrt{\frac{s-t}{3s+9t}}$, and the matrix would be
\begin{equation*}
P_1=\begin{pmatrix}
  \vspace{0.1in}
  \frac{4s+8t}{3s+9t} & \frac{4s+8t}{3s+9t} & \frac{4s+8t}{3s+9t} & \frac{12t}{3s+9t} \\
  \vspace{0.1in}
  \frac{4s+8t}{3s+9t} & \frac{4s+8t}{3s+9t} & \frac{4s+8t}{3s+9t} & \frac{12t}{3s+9t} \\
  \vspace{0.1in}
  \frac{4s+8t}{3s+9t} & \frac{4s+8t}{3s+9t} & \frac{4s+8t}{3s+9t} & \frac{12t}{3s+9t} \\
  \frac{12t}{3s+9t} & \frac{12t}{3s+9t} & \frac{12t}{3s+9t} & \frac{12s}{3s+9t}\\
\end{pmatrix}.
\end{equation*}

In the case when the signs are $(+,+,-,-)$, we get
$a_1=\sqrt{\frac{s-t}{s+3t}}$ and the matrix would be
\begin{equation} \label{P2}
P_2=\begin{pmatrix}
  \vspace{0.1in}
  \frac{2s+2t}{s+3t} & \frac{2s+2t}{s+3t} & \frac{4t}{s+3t} & \frac{4t}{s+3t} \\
  \vspace{0.1in}
  \frac{2s+2t}{s+3t} & \frac{2s+2t}{s+3t} & \frac{4t}{s+3t} & \frac{4t}{s+3t} \\
  \vspace{0.1in}
  \frac{4t}{s+3t} & \frac{4t}{s+3t} & \frac{2s+2t}{s+3t} & \frac{2s+2t}{s+3t} \\
  \frac{4t}{s+3t} & \frac{4t}{s+3t} & \frac{2s+2t}{s+3t} & \frac{2s+2t}{s+3t} \\
\end{pmatrix}.
\end{equation}

One can prove that $L(P_1) < L(P_2)$ by some calculus technique, for
example, taking the partial derivative of $\frac{L(P_1)}{L(P_2)}$
with respect to $s$. In similar approaches one can also show that
$L(P_3) < L(P_2)$ and $L(P_4) < L(P_2)$ where $P_3, P_4$ are the
corresponding matrices for the cases when signs are $(+,+,0,-)$ and
$(+,0,0,-)$ respectively. Thus the matrix in (\ref{P2}) is a global
maximum.

More generally, let $(u)_{l_1\times l_2}$ be a block matrix with
every entry being $u$ where $l_1\times l_2 \in \mathbb{N}^2$ and
$u>0$.
\begin{cnj} \label{ts} Let $n \ge 2$ and $0<t<s$. Then the matrix
$$
P=\frac{1}{ns+(n-1)nt}
\begin{pmatrix}
\vspace{0.1in}
  (\frac{s-t}{\lceil \frac{n}{2}\rceil}+t)_{\lceil \frac{n}{2}\rceil \times \lceil \frac{n}{2}\rceil} &
  (t)_{\lceil \frac{n}{2}\rceil
  \times \lfloor \frac{n}{2}\rfloor} \\
  (t)_{\lfloor \frac{n}{2}\rfloor \times \lceil \frac{n}{2}\rceil} & (\frac{s-t}{\lfloor
  \frac{n}{2}\rfloor}+t) _{\lfloor \frac{n}{2}\rfloor \times \lfloor \frac{n}{2}\rfloor}
\end{pmatrix}$$
is a global maximum for L(P) in (\ref{newLP}).
\end{cnj}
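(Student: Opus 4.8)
The plan is to generalize the argument used for the $4\times 4$ case in Section \ref{proof2} to arbitrary $n$. As in that section, fix a global maximum $P$ and write $P = J + \mathbf{b}\mathbf{a}^t$ as before, so that after the normalization in Assumption \ref{asp} we may assume $a_1 = b_1 > 0$. The first block of work is to show $a_i = b_i$ for all $i$, i.e.\ that $P$ is symmetric. The key identities are Corollary \ref{diffcor}: applying \eqref{diff2a} to the $i$-th row and \eqref{diff2b} to the $i$-th column of the symmetrized form, together with $\sum_j a_j = \sum_j b_j = 0$, should give, for each pair $(i,j)$, a polynomial relation that is antisymmetric in $a_i \leftrightarrow b_i$ (after using the already-established $a_1 = b_1$), hence divisible by $a_i - b_i$. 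One then needs the analogue of Lemma \ref{bound} — uniform bounds $a_1^2 \le c_n$, $0 \le a_1 a_i \le c_n'$, $0 \le a_1 b_i \le c_n'$ for explicit constants depending only on $n$ — to argue the cofactor is nonzero, forcing $a_i = b_i$. I expect to obtain these bounds exactly as in Lemma \ref{bound}: set $A_k = 1 + a_1 a_k$, use $\sum A_k = n$, the Corollary-\ref{diffcor} equation $\frac{2}{A_1} + \sum_{k\ge 2}\frac{1}{A_k} = n+1$, and convexity of $1/x$ applied to the $A_k$ with $k \ge 3$ to reduce to a two-variable inequality in $A_1, A_2$, then optimize.

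Once symmetry $a_i = b_i$ is in hand, the second block reproduces Lemma \ref{final}: define $F(x) = \sum_{i=1}^n \frac{1}{1+a_i x} + \frac{1}{1+x^2} - (n+1)$, observe that clearing denominators gives a degree-$n+2$ polynomial whose constant term and linear coefficient both vanish (the latter because $\sum a_i = 0$), so its roots are $a_1,\dots,a_n$ together with a double root at $0$. A sign-change argument on the interval $(-1/a_j, -1/a_i)$ between two consecutive distinct positive (or negative) values then produces a root $a_0$ with $1 + a_i a_0 < 0$, contradicting positivity of the entries of $P$. Hence all $a_i$ of a given sign are equal, and — letting $p$ be the number of positive $a_i$'s and $q$ the number of negative ones, with $p + q + z = n$ where $z$ counts the zeros — the constraint $\sum a_i = 0$ forces the common positive value to be $q/p$ times the absolute value of the common negative value. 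Substituting into any equation of the system (the analogue of \eqref{system}) yields, for each sign pattern $(p,q,z)$, a single equation in $a_1$ of the form $a_1^2(\alpha a_1^2 - (s-t)) = 0$, determining a unique candidate matrix $P_{p,q,z}$.

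The final block is the comparison: among all candidate matrices $P_{p,q,z}$, show the one with $p = \lceil n/2 \rceil$, $q = \lfloor n/2 \rfloor$, $z = 0$ maximizes $L$. This is where I expect the real difficulty to lie, and it is genuinely harder than in the $n=4$ case, where only four explicit matrices had to be compared. Two sub-problems arise: first, ruling out patterns with $z > 0$ (some $a_i = 0$), which should follow because introducing a zero row/column is like passing to a smaller effective problem and wastes likelihood — one can try to show $L(P_{p,q,z}) < L(P_{p+1,q,z-1})$ or similar by a direct ratio estimate; second, among the patterns with $z = 0$, showing balance $|p - q| \le 1$ is optimal. For the latter I would compute $L(P_{p,q,0})$ in closed form — it is a product of powers of the two distinct block values, each value a simple rational function of $p, q, s, t$ — take logarithms, treat $p$ as a continuous variable on $[1, n-1]$ with $q = n - p$, and show the resulting function of $p$ is strictly concave (or at least unimodal) with maximum at $p = n/2$; the suggested technique of differentiating $L(P_1)/L(P_2)$ with respect to $s$ in the $n=4$ discussion is a prototype of this. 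The main obstacle is making this concavity/unimodality argument rigorous and uniform in $n$, $s$, $t$ simultaneously — the expression for $\ln L(P_{p,q,0})$ mixes the combinatorial counts $p^2 + q^2$ (diagonal-type terms) and $2pq$ (off-diagonal-type terms) with the constraint-determined value of $a_1^2 = (s-t)/(ps + \text{(something)}t)$ in a way that does not obviously separate, so some care with the parametrization, and possibly an auxiliary inequality between the two block entries, will be needed.
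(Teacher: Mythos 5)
This statement is labeled a \emph{conjecture} in the paper, and the paper gives no proof of it: Section~\ref{general} says explicitly that the arguments of Section~\ref{proof2} ``only work for $n=4$'' and that for $n>4$ the authors ``are still lack of an efficient method.'' So there is no proof in the paper for your proposal to be compared against; you have sketched an attack on a stated open problem, not reconstructed a hidden argument.

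As a sketch it is reasonable, but there are two genuine gaps, and you have underweighted the first. The symmetry step $a_i=b_i$ does not routinely generalize. Lemma~\ref{a2} works because, with $n=4$, the two scalar constraints attached to a single row (the equation from Corollary~\ref{diffcor} and $\sum_k a_k=0$) leave exactly two unknown products $a_3a_1,a_4a_1$, whose sum and product are then both determined; this is what produces the bivariate $f_1$, the explicit trivariate $f_2$, and the clean factor $a_2-b_2$. For $n>4$ the same two equations leave $n-2$ unknowns $a_3a_1,\dots,a_na_1$, so you cannot solve for the product $\prod_{k\ge 3}(1+a_1a_k)$ in terms of $a_1^2$ and $a_1a_2$ alone, and the elimination yielding something divisible by $a_i-b_i$ does not go through. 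Lemma~\ref{a3a4} relies on exactly the same $n=4$-only feature (two equations pinning down the remaining two $A_k$). So your claim that the row and column equations ``should give, for each pair $(i,j)$, a polynomial relation antisymmetric in $a_i\leftrightarrow b_i$'' is a new assertion needing its own proof, not a bookkeeping extension of the $n=4$ case. The bounds analogous to Lemma~\ref{bound} would also have to be re-derived for each $n$, and even granting them it is far from clear they control the cofactor (the analogue of $f_3$, a polynomial in more variables with many more terms, whose sign structure changes with $n$).

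The second gap you do flag. Once $a_i=b_i$ is granted, your generalization of Lemma~\ref{final} is sound (the degree-$(n{+}2)$ count, the double root at $0$, and the sign-change argument all carry over), and uniqueness of the critical point within a sign pattern $(p,q,z)$ follows easily ($p\alpha=q\beta$ plus one Corollary~\ref{diffcor} equation determines $\alpha$). But the crucial comparison of $L(P_{p,q,z})$ across patterns -- ruling out $z>0$ and establishing that $|p-q|\le 1$ is optimal -- is left as an unproved concavity/unimodality claim. In short, the proposal is a plausible research plan with the hard steps identified but not carried out, not a proof; and the symmetry step deserves at least as much concern as the final comparison.
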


\begin{cnj} Let $n \ge 2$ and $0<s \le t$.
Then the matrix
$$
P=
\begin{pmatrix}
\vspace{0.1in}
  \frac{2s}{n^2(s+t)} & \frac{1}{n^2} & \cdots & \frac{1}{n^2} & \frac{2t}{n^2(s+t)} \\
  \vspace{0.1in}
  \frac{1}{n^2} & \frac{1}{n^2} & \cdots & \frac{1}{n^2} & \frac{1}{n^2} \\
  \vspace{0.1in}
  \vdots & \vdots & \vdots & \vdots & \vdots\\
  \vspace{0.1in}
  \frac{1}{n^2} & \frac{1}{n^2} & \cdots & \frac{1}{n^2} & \frac{1}{n^2} \\
  \vspace{0.1in}
  \frac{2t}{n^2(s+t)} & \frac{1}{n^2} & \cdots & \frac{1}{n^2} & \frac{2s}{n^2(s+t)} \\
\end{pmatrix}$$
is a global maximum for L(P) in
(\ref{newLP}).\\
\end{cnj}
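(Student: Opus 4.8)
The plan is to follow the architecture of Sections~\ref{proof1} and~\ref{proof2} verbatim, re-running every step with the exponents $s,t$ in place of $2,1$; write $c=s/t\in(0,1]$. If $c=1$ the statement is immediate: $L(P)=\prod_{i,j}p_{ij}^{\,t}$ is maximized, by the AM--GM inequality on the $n^2$ entries, at the rank--one matrix all of whose entries equal $1/n^2$, which is the conjectured $P$. So assume $s<t$, i.e.\ $c<1$. Lemma~\ref{rowsum} carries over because scaling the $i$-th row by $n/s_i$ multiplies $L$ by $(n/s_i)^{\,s+(n-1)t}$ and $s+(n-1)t>0$, so AM--GM still forces all row and column sums to equal $n$; hence $P=J+(b_i)_i(a_j)_j$ with $\sum a_i=\sum b_i=0$. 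The Lagrangian computation now gives, in place of Lemmas~\ref{diff} and~\ref{diffcor},
\[
\sum_{j=1}^{n}\frac{b_j}{1+a_ib_j}+\frac{(c-1)\,b_i}{1+a_ib_i}=0,
\qquad
\sum_{j=1}^{n}\frac{1}{1+a_ib_j}+\frac{c-1}{1+a_ib_i}=n+c-1,
\]
together with the two column analogues. The Jensen step of Lemma~\ref{sign} still yields $\sum_j 1/(1+a_ib_j)\ge n$, but since $c-1<0$ the comparison with the second identity now gives $1/(1+a_ib_i)\ge1$, i.e.\ $a_ib_i\le0$; combined with $a_i=0\iff b_i=0$ (which follows from $(c-1)b_i=0$) this means $a_i$ and $b_i$ always have opposite signs. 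Similarly the swap--and--compare step of Lemma~\ref{order} now gives $L(P)/L(\bar P)=\bigl((1+a_ib_i)(1+a_jb_j)/((1+a_ib_j)(1+a_jb_i))\bigr)^{\,s-t}$, and since $s-t<0$ optimality forces $(a_i-a_j)(b_i-b_j)\le0$. So, after reordering, we may assume $a_1\ge\cdots\ge a_n$ and $b_1\le\cdots\le b_n$; using the row/column reversal symmetry we may further assume $a_2\ge0$; and since $a_1\ne0$ (else $P=J$) we rescale $(a_i)\mapsto\mu(a_i)$, $(b_i)\mapsto\mu^{-1}(b_i)$ with $\mu=\sqrt{-b_1/a_1}$ so that $a_1=-b_1$.

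The decisive step is to upgrade this to $a_i=-b_i$ for all $i$, the analogue of Lemmas~\ref{a2} and~\ref{a3a4}. For $n=4$ I would imitate the paper: applying the second identity above to the first two rows and the first two columns of $P$, and eliminating $a_3,a_4$ (two equations in the two unknowns $a_1a_3$ and $a_1a_4$ --- the feature special to $n=4$), yields a relation $f_2^{(c)}(a_1,a_2,b_2)=0$ whose symmetrization under $a_2\leftrightarrow b_2$ factors as $(a_2-b_2)\,f_3^{(c)}(a_1,a_2,b_2)=0$; one then shows the cofactor $f_3^{(c)}$ is nonzero by means of $c$--dependent bounds on $a_1^2$, $a_1a_2$ and $a_1b_2$ obtained exactly as in Lemma~\ref{bound}. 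With $a_2=-b_2$ in hand, $a_3=-b_3$ and $a_4=-b_4$ follow by the elimination of Lemma~\ref{a3a4}. Once $a_i=-b_i$ for all $i$, we have $P_{ij}=1-a_ia_j$ with $\sum a_i=0$ and the row identity reads $\sum_j 1/(1-a_ia_j)+(c-1)/(1-a_i^2)=n+c-1$; now the argument of Lemma~\ref{final} works for every $n$. Indeed $F(x)=\sum_{j}1/(1-a_jx)+(c-1)/(1-x^2)-(n+c-1)$ is a rational function whose numerator has the same degree as its denominator, with a double zero at $x=0$ (because $\sum a_i=0$) and no other zeros besides the values $a_j$; so if two same--sign $a_i,a_j$ were unequal, $F$ would change sign between $1/a_i$ and $1/a_j$ (an interval free of poles, since each $a_k\in(-1,1)$), producing a zero equal to some $a_m$ with $1-a_ia_m<0$, contradicting positivity of the entries. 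Hence the $a_i$ take at most three values, $\alpha>0>\beta$ and $0$, with multiplicities $p,r\ge1$ and $q=n-p-r\ge0$, subject to $p\alpha+r\beta=0$.

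It remains to finish the finite optimization over sign patterns. For a pattern $(p,q,r)$ the row identities for the $\alpha$-- and $\beta$--rows are two rational equations in $\alpha^2$; when $p=r$ they coincide and give $\alpha^2=(t-s)/((2p-1)t+s)\in(0,1)$, so for each $p\in\{1,\dots,\lfloor n/2\rfloor\}$ there is a single candidate $P_p$ (normalized to have entries summing to $n^2$), whose entries are $1$ except for $2p$ diagonal entries equal to $1-\alpha^2$ and $2p(2p-1)$ off--diagonal entries of which $2p^2$ equal $1+\alpha^2$, so
\[
L(P_p)=\bigl(1-\alpha^2\bigr)^{\,2p\,(s+(p-1)t)}\,\bigl(1+\alpha^2\bigr)^{\,2p^2t},
\]
and $P_1/n^2$ is precisely the conjectured matrix. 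When $p\ne r$ the two equations share a root in $0<\alpha^2<1$ only for $c$ in the zero set of a resultant, hence for finitely many values of $c$; those exceptional patterns, together with all $P_p$ for $p\ge2$, would be dispatched by the inequalities $L(P_1)>L(P_p)$ and their asymmetric counterparts, each a one--variable calculus problem --- for instance computing the sign of $\partial_s\log\bigl(L(P_1)/L(P_p)\bigr)$ at fixed $t$, in the spirit of the calculus argument sketched after~(\ref{P2}). I expect Step~2, the passage to $a_i=-b_i$, to be the real obstacle: the cofactor $f_3^{(c)}$ now carries the parameter $c\in(0,1)$, so the clean numerical estimate of Lemma~\ref{a2} must be replaced by a bound uniform in $c$; and, more seriously, the elimination producing $f_2^{(c)}$ is genuinely tied to $n=4$ --- for $n\ge5$ two moment constraints no longer determine the remaining products $a_1a_j$, and a new idea is needed to force $a_i=-b_i$. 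Everything downstream (Lemma~\ref{final}, the trichotomy of values, and the comparison among the $P_p$) is uniform in $n$.
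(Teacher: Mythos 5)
This statement is Conjecture~12, which the paper leaves open --- there is no ``paper's own proof'' to compare against. So you are attempting a new result, and the relevant question is whether your outline closes it.

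Your sign-flipped adaptation of Section~\ref{proof1} is correct as far as it goes. The modified stationarity identity $\sum_j 1/(1+a_ib_j)+(c-1)/(1+a_ib_i)=n+c-1$ is right; with $c<1$, Jensen now forces $a_ib_i\le 0$, the swap argument gives $L(P)/L(\bar P)=\bigl((1+a_ib_i)(1+a_jb_j)/((1+a_ib_j)(1+a_jb_i))\bigr)^{s-t}$ so optimality gives $(a_i-a_j)(b_i-b_j)\le 0$, and the normalization $a_1=-b_1$ is legitimate. The critical-point formula $\alpha^2=(t-s)/((2p-1)t+s)$ for the balanced pattern $(p,n-2p,p)$ and the resulting $L(P_p)=(1-\alpha^2)^{2p(s+(p-1)t)}(1+\alpha^2)^{2p^2t}$ both check out, and $P_1$ is indeed the conjectured matrix.

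But the outline does not constitute a proof, for exactly the reasons you yourself flag, and a couple more. (i) The passage to $a_i=-b_i$ is not established: the paper's route uses the identities for rows/columns $1,2$ to eliminate the two remaining products $a_1a_3,a_1a_4$, which is special to $n=4$; for $n\ge5$ you have $n-2\ge3$ remaining products and only two moment constraints, so the elimination producing $f_2^{(c)}$ simply does not exist, and no substitute is offered. Even at $n=4$, the bound on the cofactor $f_3^{(c)}$ now depends on $c=s/t$, and the required uniform estimate is asserted, not proved. Without $a_i=-b_i$, the Lemma~\ref{final}-style argument cannot even begin. (ii) The final competition among critical points is incomplete: you must show $L(P_1)>L(P_p)$ for all $p\in\{2,\dots,\lfloor n/2\rfloor\}$, a one-parameter family of inequalities in $(s,t)$ rather than the three fixed comparisons the paper handled after~(\ref{P2}); this is plausible (a quick check at $s=0$ supports it) but is left as ``a one-variable calculus problem'' without being carried out. (iii) Dismissing the $p\ne r$ patterns by a resultant-genericity argument presumes the resultant is not identically zero in $c$, which is unverified; and those finitely many exceptional $c$ values would still need a separate comparison. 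In short: the structural skeleton is sound and faithful to the paper's method, but every load-bearing step beyond the elementary symmetry lemmas (Step~2 most of all) is open, so the conjecture remains unproved.
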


\subsection*{Acknowledgment} The authors were partially
supported by the National Science Foundation under grants DMS-0302549 and DMS-1005369
and National Security Agency under grant H98230-08-1-0030. We would like
to thank Bernd Sturmfels for his encouragement and anonymous
referees for their helpful comments, in particular one of them provided Maple codes to us.

\end{document}